\newtheoremstyle{sfstyle} 
  {\topsep} 
  {\topsep} 
  {\itshape} 
  {} 
  {\sffamily\bfseries} 
  {.} 
  {1em} 
  {} 
\theoremstyle{sfstyle}
\newtheorem{theorem}{Theorem}
\newtheorem{lemma}{Lemma}
\theoremstyle{sfstyle}
\newtheorem{definition}{Definition}
\newtheorem{example}{Example}
\DeclareMathAlphabet{\pazocal}{OMS}{zplm}{m}{n}
\newcommand{\lt}{\left}
\newcommand{\rt}{\right}
\titleformat{\section}{\sffamily\Large\bfseries}{\thesection.}{1em}{}
\titleformat{\subsection}{\sffamily\large\bfseries}{\thesubsection.}{1em}{}
\def\be{{\mathbf e}}
\def\bS{{\mathbf S}}
\def\bX{{\mathbf X}}
\def\bx{{\mathbf x}}
\def\by{{\mathbf y}}
\def\bZ{{\mathbf Z}}
\def\bz{{\mathbf z}}
\def\bLambda{{\boldsymbol \Lambda}}
\def\bmu{\boldsymbol \mu}
\def\cA{{\cal A}}
\def\cB{{\cal B}}
\def\cC{{\cal C}}
\def\cE{{\cal E}}
\def\cF{{\cal F}}
\def\cH{{\cal H}}
\def\cN{{\cal N}}
\def\cP{{\cal P}}
\def\cS{{\cal S}}
\def\cW{{\cal W}}
\def\sE{{\sf E}}
\def\sP{{\sf P}}
\def\bbR{\mathbb{R}}
\def\sone{\mathds{1}}
\def\bzero{\mathbf 0}
\def\rd{{\rm d}}
\def\Var{{\sf Var}}
\def\set{\cE}
\def\VR{\text{VR}}
\def\ER{\text{ER}}
\def\argmin{\arg\min}
\title{\sf\textbf{Wasserstein Distributionally Robust Rare-Event Simulation}}
\author{Dohyun Ahn\thanks{Corresponding author, E-mail: \href{mailto:dohyun.ahn@cuhk.edu.hk}{\tt dohyun.ahn@cuhk.edu.hk}},~~~Huiyi Chen\\
{\small \it The Chinese University of Hong Kong}
\smallskip\\
Lewen Zheng\\
{\small\it Huawei Hong Kong Research Center}}
\date{January 2026}
\begin{document}

\maketitle
\begin{abstract}
Standard rare-event simulation techniques require exact distributional specifications, which limits their effectiveness in the presence of distributional
uncertainty. To address this, we develop a novel framework for estimating rare-event probabilities subject to such distributional model risk. Specifically,
we focus on computing worst-case rare-event probabilities, defined as a distributionally robust bound against a Wasserstein ambiguity set centered at a specific nominal
distribution. By exploiting a dual characterization of this bound, we propose Distributionally Robust Importance Sampling (DRIS), a computationally tractable methodology designed to substantially reduce the variance associated with estimating the dual components. The proposed method is simple to implement and requires low sampling costs. Most importantly, it achieves \emph{vanishing relative error}---the strongest efficiency guarantee that is notoriously difficult to establish  in rare-event simulation. Our numerical studies confirm the superior performance of DRIS against existing
benchmarks.
\end{abstract}

\section{Introduction}
From managing financial tail risk to predicting extreme climate events, quantifying the likelihood of rare events is critical for system stability and safety~\citep{Glasserman2003-MCFE,Asmussen2007,Rubino2009}.
The fundamental mathematical task involves estimating the probability that a random vector falls into a critical rare-event set. Since standard Monte Carlo methods are computationally inefficient for such tasks, sophisticated variance reduction techniques---such
as importance sampling, conditional Monte Carlo, splitting, and stratification---have been developed for various models and problems; see, e.g., \cite{Glasserman2000-var-red,Glasserman2008,Juneja:02,bassamboo_portfolio_2008,Blanchet2014-queue,bai_rare-event_2022,Ahn2023,deo2025achieving} and references therein. 

However, a significant theoretical gap persists: these classical methods assume precise knowledge of the underlying probability distributions, making them vulnerable to model misspecification. In real-world scenarios, such granular information is rarely available---particularly when data are scarce or noisy---resulting in distributional uncertainty. To overcome this limitation, we employ \emph{a distributionally robust approach to rare-event simulation}. To be more specific, we focus on efficiently computing worst-case rare-event probabilities over a family of plausible distributions, mathematically formalized as a Wasserstein ball surrounding a nominal distributional model. To the best of our knowledge, this is the first study to introduce an efficient Monte Carlo approach for rare-event probability estimation in the presence of distributional model risk.

In terms of developing simulation methods for  worst-case expectations under model uncertainty, our approach is closely related to those of \cite{Glasserman2014} and \cite{Blanchet2017}. The former proposes the so-called robust Monte Carlo to estimate risk measures over distributional ambiguity sets defined by relative entropy and $\alpha$-divergence, while the latter focuses on computing worst-case expectations of two random vectors with fixed marginals but unknown dependence structures. Despite such methodological developments, neither of these prior studies specifically target variance reduction for rare-event simulation; consequently, their efficacy in this regime remains unestablished.

Regarding distributional robustness specifically for rare-events, existing literature has predominantly relied on optimization-based or extreme-value-theory-based approaches rather than simulation methodologies; see, for instance, 
\cite{Lam2017-robust-tail,BlanchetHM2020} and \cite{Bai2023}. 
Concurrently, a recent study by 
\cite{Huang2023} 
utilizes random walk tail probabilities to analyze the vulnerability of rare-event probabilities to tail uncertainty, arguing that heavy-tailed cases exhibit a higher sensitivity to model misspecification than light-tailed cases. In contrast, we put an emphasis on simulation and bridge the gap by proposing a  variance reduction technique for estimating worst-case rare-event probabilities.

Specifically, this paper develops a novel importance sampling method, which we call \emph{Distributionally Robust Importance Sampling (DRIS)}, to estimate the aforementioned worst-case rare-event probabilities for convex target sets. Leveraging a general duality result for Wasserstein distributionally robust optimization, the probability of interest can be reformulated as the probability of a neighborhood of the target set under the nominal distribution. From a computational viewpoint, this dual reformulation requires a two-step process: first estimating the neighborhood and then incorporating it into the final probability computation. Since both steps involve rare-event simulation, our DRIS method is designed to address these requirements via a cohesive, computationally efficient, and easy-to-implement algorithm. 

Most importantly, we establish that the DRIS estimator admits a central limit theorem and exhibits \emph{vanishing relative error} (Theorems~\ref{thm:clt} and~\ref{thm:bre}). These main theoretical results are built upon (i) empirical process theory with Vapnik–Chervonenkis-type arguments and (ii) rare-event analysis in simulation. It is worth emphasizing that the property of vanishing relative error, which ensures the relative error decays to zero as the target event becomes increasingly rare, is arguably the highest notion of efficiency in rare-event simulation and is seldom achieved in prior studies.

The remainder of the paper is organized as follows. Section~\ref{sec:problem} formulates the main problem. In Section~\ref{sec:prelim}, we review strong duality for Wasserstein distributionally robust optimization in the context of worst-case probabilities and present preliminary theoretical results. Section~\ref{sec:main analysis} introduces the proposed DRIS procedure and establishes its theoretical performance guarantees in the rare-event regime. In Section \ref{sec:numerical}, we numerically validate the effectiveness of the algorithm. Section \ref{sec:conclude} concludes the paper. All proofs are deferred to the appendices.

\section{Problem Formulation}\label{sec:problem}
Let $\cP$ denote the set of all probability distributions supported on the $n$-dimensional Euclidean space. Then, the 2-Wasserstein distance between $\sP_0,\sP\in\cP$ is defined as 
\begin{equation}	
\cW_2(\sP_0,\sP)=\inf_{\pi\in\Pi(\sP_0,\sP)}\lt(\sE_{(\bX_0,\bX)\sim\pi}\big[\|\bX_0-\bX\|^2\big]\rt)^{1/2},
\end{equation}
where $\Pi(\sP_0,\sP)$ is the set of all couplings of $\sP_0$ and $\sP$, that is, the set of all joint distributions with marginals $\sP_0$ and $\sP$, respectively. Accordingly, the 2-Wasserstein ball of radius $\delta>0$ centered at the nominal distribution $\sP_0$ is given by
$$
\cB_\delta(\sP_0) = \{\sP\in\cP:\cW_2(\sP_0,\sP)\leq\delta\}.
$$

In this paper, we investigate the estimation of the worst-case probability defined by:
\begin{equation}\label{eq:basic-quantity}
	p_*=\sup_{\sP\in\cB_\delta(\sP_0)}\sP(\bX\in\set),
\end{equation}
where $\delta\in(0,\infty)$ is a fixed constant, $\set$ is a nonempty, full-dimensional, closed, and convex set that does not contain the origin, and $\sP_0$ is the $n$-dimensional standard normal distribution. This quantity corresponds to a version of the inner worst-case problem
in Wasserstein distributionally robust optimization, which has received considerable attention in recent literature \citep{zhangYG2025}. 
Although we focus on Gaussian nominal distributions, the proposed methodology extends naturally to other multivariate elliptical families. We prioritize the Gaussian setting due to its prevalence in the OR/MS literature, where critical metrics often correspond to rare-event probabilities governed by standard normal distributions~\citep[][Chapter 9]{Bucklew2004}. Below is one of such examples in finance:
\begin{example}\label{ex:portfolio}
According to \cite{Glasserman2000-var-red}, the loss of a portfolio of European call/put options over the time interval $[t,t+\rd t]$ can be approximated by
$$
L\coloneqq V(\bS_t,t) - V(\bS_t+\rd\bS,t+\rd t) \approx -\frac{\partial V}{\partial t} \rd t -\Delta^\top \rd\bS-\frac12\rd\bS^\top\Gamma \rd\bS \eqqcolon \tilde L,
$$
where $\bS_t$ and $V(\bS_t,t)$ denote the values of $n$ risk factors and the portfolio value, respectively, $\rd\bS = \bS_{t+\rd t}-\bS_t$, $\Delta = \nabla_{\bS}V^\top$, $\Gamma = \nabla^2_{\bS}V$, and ``$\approx$'' holds by the delta-gamma approximation.
If $\rd\bS = \bS_{t+\rd t}-\bS_t$ follows a multivariate normal and the approximation is exact (i.e., $L=\tilde L$), 
    $$
    \sP(L>\ell) = \sP\lt(a + \sum_{i=1}^n\lt(b_iX_i + c_iX_i^2\rt)>\ell\rt),
    $$
for a loss threshold $\ell>0$, fixed constants $a,b_1,\ldots,b_n,c_1,\ldots,c_n$ with $c_1,\ldots,c_n\leq0$, and $X_1,\ldots,X_n\stackrel{{\sf iid}}{\sim}\cN(0,1)$.
This quantity is commonly used to define a portfolio risk measure, and when $\ell$ is large, it becomes a probability that independent standard normals belong to a convex rare-event set.
\end{example}

In addition to this example, many continuous-time stochastic models, such as geometric Brownian motion and Gaussian Markov processes, can be simulated as weighted sums of standard normal variables via the Euler scheme, which is essential not only for financial modeling but also for analyzing system stability in other domains: heavy-traffic approximations in queueing theory rely on diffusion processes driven by Brownian motion, and demand processes in supply chain management are often modeled as Gaussian random walks.

It is worth noting that if $\bX$ follows an $n$-dimensional non-standard normal distribution, one can find $\bmu\in\bbR^n$ and $\bLambda\in\bbR^{n\times m}$ with $n\geq m$ such that $\bX$ has that same distribution as $\bmu+\bLambda\widetilde\bX$, where $\widetilde\bX$ follows an $m$-dimensional standard normal distribution. Accordingly, the probability $\sP(\bX\in\set)$ coincides with the probability that $\widetilde\bX$ belongs to another convex set given by $\{\bx:\bmu+\bLambda\bx\in\set\}$. Consequently, restricting the analysis to the standard normal distribution suffices for all Gaussian models.

Without loss of any generality, we assume that $\bx^*\coloneqq\argmin_{\bx\in\set}\|\bx\|$ lies on the $x_1$-axis. It can be satisfied through a suitable rotation of the coordinates and a rearrangement of the variables, which does not affect \eqref{eq:basic-quantity} because the standard normal distribution is invariant under such transformations. Furthermore, we focus on a situation where $\{\bX\in\set\}$ is a rare event in the sense that its likelihood is close to zero. We study this mathematically by
considering a sequence of sets indexed by a rarity parameter $r > 0$:
\begin{equation}\label{eq:Er}
\set_r = \lt\{\frac{r}{\|\bx^*\|}\bx:\bx\in\set\rt\},
\end{equation}
in which case $(r,0,\ldots,0)=\argmin_{\bx\in\set_r}\|\bx\|$. Hence, the set $\set_r$ moves away from the origin as $r\to\infty$, leading to $\lim_{r\to\infty}\sP_0(\bX\in\set_r)=0$. 

To analyze the efficiency of the proposed estimator, we adopt the following performance criterion widely used in the rare-event simulation literature~\citep[see, e.g.,][]{bassamboo_portfolio_2008,NakayamaT2023}:
\begin{definition}\label{def:efficiency}
Let $q_r$ denote a quantity of interest satisfying $q_r\to0$ as $r\to\infty$. Suppose that an unbiased estimator $Q_{N,r}$ for $q_r$, constructed by $N$ iid samples, admits a central limit theorem with asymptotic variance $\xi_r^2$ for any $r>0$; that is,
$\sqrt{N}(Q_{N,r}-q_r)\Rightarrow\cN(0,\xi_r^2)$ as $N\to\infty$, where $\Rightarrow$ represents convergence in distribution, and $\cN(\gamma,\nu^2)$ means a normal random variable with mean $\gamma$ and variance $\nu^2$.
	Then, we say that $Q_{N,r}$ has \textit{vanishing relative error} if
	$$\limsup_{r\to\infty}\frac{\xi_r}{q_r}=0.$$
\end{definition}
Vanishing relative error is often regarded as the highest efficiency notion in the context of rare-event simulation. As noted in~\cite{botev_normal_2017}, Monte Carlo estimators for light-tailed distributions seldom exhibit vanishing relative error. This property ensures that, given a fixed large sample size, the accuracy of the associated estimator improves as the target event becomes rarer.

\section{Preliminaries}\label{sec:prelim}
In this section, we review a strong duality result for our target quantity in \eqref{eq:basic-quantity} and introduce our preliminary theoretical analysis. Both play a crucial role in making the problem tractable and facilitating the main analysis in Section~\ref{sec:main analysis}. Before delving into the details, let us briefly introduce our notational conventions used throughout the paper. We denote by $\sE_0$ the expectation under the nominal distribution $\sP_0$, and we use $d(\bx,\cS)=\min_{\by\in\cS}\|\bx-\by\|$ to represent the distance between a point $\bx\in\bbR^n$ and a set $\cS\subset\bbR^n$. Also, for brevity, we write $\sE_0[g(\bX);\cA] \coloneqq \sE_0[g(\bX)\sone\{\cA\}]$ for any function $g$ and any event $\cA$.

{\sf\textbf{Strong duality for \eqref{eq:basic-quantity}.}}
The optimization problem in \eqref{eq:basic-quantity} is infinite-dimensional and thus intractable to solve directly. Fortunately, established results in the literature on Wasserstein distributionally robust optimization demonstrate that the dual formulation of \eqref{eq:basic-quantity} is computationally tractable. We restate a version of these results in our framework and discuss its implications for rare-event simulation.

\begin{lemma}[Theorem 2 of \cite{Blanchet2019-DRO}]\label{lem:dual}
	Let $h(u)=\sE_0[d(\bX,\set)^2;d(\bX,\set)\leq u]$ and $p(u)=\sP_0(d(\bX,\set)\leq u)$. Then, the probability $p_*$
 in \eqref{eq:basic-quantity} is equal to $p({u_*})$, where ${u_*}=h^{-1}(\delta^2)$.
\end{lemma}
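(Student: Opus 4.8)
The plan is to obtain the stated identity as a direct specialization of the general strong-duality theorem for Wasserstein distributionally robust optimization (\cite{Blanchet2019-DRO}, Theorem~2), applied with the bounded loss $\ell(\bx)=\sone\{\bx\in\set\}$ and the transport cost $c(\bx,\bz)=\|\bx-\bz\|^2$. The hypotheses to verify are that $\ell$ is upper semicontinuous --- which holds because $\set$ is closed --- and that $\sP_0$ has a finite second moment --- which holds because $\sP_0$ is Gaussian. Granting these, the theorem gives
\begin{equation}\label{eq:sketch-dual}
p_*=\inf_{\lambda\ge0}\left\{\lambda\delta^2+\sE_0\!\left[\,\sup_{\bz\in\bbR^n}\left(\sone\{\bz\in\set\}-\lambda\|\bz-\bX\|^2\right)\right]\right\}.
\end{equation}

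The first step is to evaluate the inner supremum pointwise in $\bX$. Because the penalty $\lambda\|\bz-\bX\|^2$ is nonnegative and the indicator is at most one, the supremum never exceeds one; conversely, taking $\bz$ to be the Euclidean projection of $\bX$ onto $\set$ (which exists and is unique since $\set$ is nonempty, closed and convex) produces the value $1-\lambda\,d(\bX,\set)^2$ whenever that is nonnegative, while $\bz=\bX$ always yields at least $0$. Hence the inner supremum equals $\max\{0,\,1-\lambda\,d(\bX,\set)^2\}$ for every $\bX$ (the case $\bX\in\set$ is subsumed, since then $d(\bX,\set)=0$). Substituting this into \eqref{eq:sketch-dual} collapses the infinite-dimensional problem to the one-dimensional convex program
\begin{equation}\label{eq:sketch-1d}
p_*=\inf_{\lambda\ge0}\phi(\lambda),\qquad\phi(\lambda)\coloneqq\lambda\delta^2+\sE_0\!\left[\max\{0,\,1-\lambda\,d(\bX,\set)^2\}\right].
\end{equation}

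The second step is to solve \eqref{eq:sketch-1d}. Writing $D\coloneqq d(\bX,\set)$, each integrand $\lambda\mapsto\max\{0,1-\lambda D^2\}$ is convex and nonincreasing, so $\phi$ is convex; differentiating under the expectation (legitimate since $D^2$ is integrable, e.g.\ $D\le\|\bX-\bx^*\|$) gives $\phi'(\lambda)=\delta^2-\sE_0[D^2;\,D^2<1/\lambda]$ at every $\lambda>0$ with $\sP_0(D^2=1/\lambda)=0$. Setting $\phi'(\lambda_*)=0$ and substituting $\lambda_*=1/u_*^2$ turns the stationarity condition into $\sE_0[D^2;\,D\le u_*]=h(u_*)=\delta^2$, i.e.\ $u_*=h^{-1}(\delta^2)$. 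Here one checks that $h$ is continuous, nondecreasing, with $h(0)=0$ and $h(u)\uparrow\sE_0[D^2]<\infty$, and strictly increasing on $(0,\infty)$ because $\set$ is full-dimensional and $\sP_0$ admits a density, so that each collar $\{u_1<d(\bx,\set)\le u_2\}$ has positive probability; thus $h^{-1}(\delta^2)$ is well defined whenever $\delta^2<\sE_0[D^2]$, which is exactly the regime that prevails once the target set is rare (cf.\ Section~\ref{sec:main analysis}), while if $\delta^2\ge\sE_0[D^2]$ one has $p_*=1=p(\infty)$ with the natural convention. Finally, plugging $\lambda_*=1/u_*^2$ into \eqref{eq:sketch-1d} and using $\max\{0,1-D^2/u_*^2\}=\sone\{D\le u_*\}\,(1-D^2/u_*^2)$ yields
\begin{equation}
\phi(\lambda_*)=\frac{\delta^2}{u_*^2}+p(u_*)-\frac{h(u_*)}{u_*^2}=p(u_*),
\end{equation}
since $h(u_*)=\delta^2$; this is the claimed identity.

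The only real subtleties I anticipate are (i) confirming that the regularity conditions of the general duality theorem genuinely accommodate the \emph{discontinuous} indicator loss --- which is precisely why upper semicontinuity of $\sone\{\cdot\in\set\}$ together with the finiteness of the second moment of $\sP_0$ are the load-bearing hypotheses --- and (ii) handling the kink of $\max\{0,1-\lambda D^2\}$ at $\lambda D^2=1$ in the first-order analysis: one must exclude an atom of $D^2$ at $1/\lambda_*$ (again a consequence of $\set$ being full-dimensional and $\sP_0$ having a density) so that optimality is the clean equation $h(u_*)=\delta^2$ rather than a pair of one-sided inequalities. Everything else is a routine computation.
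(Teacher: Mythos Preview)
The paper does not supply its own proof of Lemma~\ref{lem:dual}; it is stated as a direct restatement of Theorem~2 of \cite{Blanchet2019-DRO} and invoked without further argument. Your proposal therefore goes beyond what the paper does: you carry out explicitly the specialization of the general strong-duality theorem to the indicator loss $\ell=\sone\{\cdot\in\set\}$ and the quadratic cost, evaluate the inner supremum in closed form as $\max\{0,1-\lambda d(\bX,\set)^2\}$, and then solve the resulting one-dimensional convex problem via its first-order condition to arrive at $p_*=p(u_*)$ with $h(u_*)=\delta^2$. This derivation is correct, and the two subtleties you flag---upper semicontinuity of the indicator on a closed set to meet the hypotheses of the duality theorem, and the absence of an atom of $d(\bX,\set)$ at $u_*$ so that the stationarity condition is an equality---are precisely the points that need checking. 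In short, your argument is sound and is exactly the computation that underlies the cited result; the paper simply defers to the literature rather than reproducing it.
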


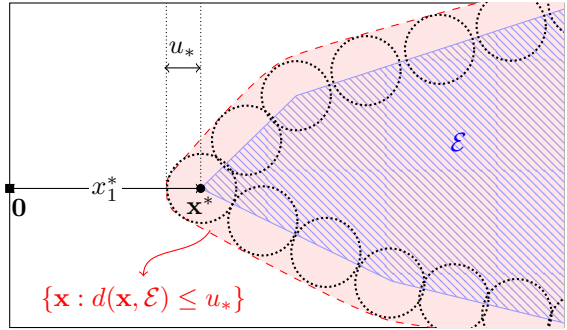
\begin{figure}
    \caption{A graphical illustration of the relationship between the target set and its inflated version based on the duality result\label{fig:dual}}
    \centering
    \begin{tikzpicture}[font=\small]
		\tikzstyle{ann} = [fill=white,font=\small,inner sep=1pt]
			\begin{axis}[name=plot1,width=2*1.45in, height=2*0.85in,
				scale only axis, ticks=none,
				xmin=0, xmax=5.8,
				ymin=0.5, ymax=4,
				axis background/.style={fill=white}]
				\addplot+ [red,mark=none,dashed,fill=red!10] plot [smooth, tension=0.3] coordinates {
					(6,4.3) (2.9,3.4) (1.72,2.2)
                    (1.65,2)(1.79,1.7) (4,0.57) (7,0.5) (6,4.2)};
				\addplot+ [draw=blue!40, mark=none,pattern=north west lines, pattern color=blue!40] coordinates {
					(6,4) (3,3) (2,2) (4,1) (6,0.5) (6,4)};
				\draw[black,densely dotted,thick] (axis cs:6,4) circle[radius=0.46cm];
				\draw[black,densely dotted,thick] (axis cs:5.25,3.75) circle[radius=0.46cm];
				\draw[black,densely dotted,thick] (axis cs:4.5,3.5) circle[radius=0.46cm];
				\draw[black,densely dotted,thick] (axis cs:3.73,3.27) circle[radius=0.46cm];
				
				\draw[black,densely dotted,thick] (axis cs:3,3) circle[radius=0.46cm];
				\draw[black,densely dotted,thick] (axis cs:2.48,2.52) circle[radius=0.46cm];
				
				\draw[black,densely dotted,thick] (axis cs:2.01,2) circle[radius=0.46cm];
				\draw[black,densely dotted,thick] (axis cs:2.65,1.65) circle[radius=0.46cm];
				\draw[black,densely dotted,thick] (axis cs:3.31,1.31) circle[radius=0.46cm];
				
				\draw[black,densely dotted,thick] (axis cs:4-0.02,1-0.02) circle[radius=0.46cm];
				\draw[black,densely dotted,thick] (axis cs:4.65,0.81) circle[radius=0.46cm];
				\draw[black,densely dotted,thick] (axis cs:5.31,0.65) circle[radius=0.46cm];
				\draw[black,densely dotted,thick] (axis cs:6,0.5) circle[radius=0.46cm];
			\draw[arrows=<->](axis cs:2-0.38,3.3)--(axis cs:2,3.3);
			\draw[arrows=<->](axis cs:0,2)--(axis cs:2,2);
			\node at (axis cs:1.81,3.5) {$u_*$};
			\node[ann] at (axis cs:1,2) {$x_1^*$};
			\draw[black,densely dotted] (axis cs:2,2)--(axis cs:2,4);
			\draw[black,densely dotted] (axis cs:2-0.36,2)--(axis cs:2-0.36,4);
            \addplot+[black, mark=square*, mark size=1.5, every mark/.append style={solid, fill=black}] coordinates {(0,2)};
            \addplot+[black, mark=*, mark size=1.5, every mark/.append style={solid, fill=black}] coordinates {(2,2)};
			\node at (axis cs:0.1,1.8) {{\small${\bf0}$}};
			\node at (axis cs:2,1.82) {{\small$\bx^*$}};
				\node at (axis cs:4.7,2.5){$\color{blue}{\set}$};
				\node (A) at (axis cs:1.4,0.8){$\color{red}{\{\bx:d(\bx,\set)\leq u_*\}}$};
            \draw[->,red] (axis cs:2.1,1.55) to [out=240,in=90] (A.north);
			\end{axis}
		\end{tikzpicture}
\end{figure}

The significance of this duality result lies in expressing the worst-case probability $p_*$ as the probability, under the nominal distribution $\sP_0$, of an inflated superset of the target event, given by $\{\bx:d(\bx,\set)\leq{u_*}\}$. Figure \ref{fig:dual} illustrates the connection between the target set and its inflated counterpart: the blue slashed region depicts the target set $\set$, while the red shaded area corresponds to its inflated version $\{\bx:d(\bx,\set)\leq{u_*}\}$. The dotted circles represent a radius of $u_*$; the union of such circles centered at all points in $\set$ characterizes the inflated superset. Based on the assumption in Section~\ref{sec:problem}, $\bx^*$ lies on the $x_1$-axis, and hence, its distance from the origin is $x_1^*$.

Since Lemma~\ref{lem:dual} holds for any set $\set$, the function $h(\cdot)$ and the value $u_*$ in the lemma are similarly defined for the sequence of sets $\{\set_r\}_{r>0}$ in \eqref{eq:Er} as follows: for $r>0$ and $\delta,u\geq0$, we let
$$h_r(u)=\sE_0[d(\bX,\set_r)^2;d(\bX,\set_r)\leq u]
~~\text{and}~~u_r = h_r^{-1}(\delta^2).
$$
Then, by the above lemma, we have
\begin{equation}\label{eq:def-pr}
p_r\coloneqq \sup_{\sP\in\cB_\delta(\sP_0)}\sP(\bX\in\set_r)=\sP_0(d(\bX,\set_r)\leq u_r).
\end{equation}
Although $u_r$ and $p_r$ depend on the radius $\delta$ of the 2-Wasserstein ball, this dependence is suppressed in the notation.

{\sf\textbf{Preliminary theoretical results.}} Given our rare-event regime where $r$ tends to $\infty$, we analyze how $u_r$ and $p_r$ behave as $r$ grows. Firstly, the following lemma describes the behavior of $u_r$:
\begin{lemma}[Asymptotic Behavior of $u_r$]\label{lem:u}
	For any $\delta,M>0$, there exists $r_0>0$ such that for all $r\geq r_0$,
	\begin{equation}
		\label{eq:u-bound}
		M<r-u_r<\bar\Phi^{-1}\lt(\frac{\delta^2}{r^2}\rt),
	\end{equation}
where $\bar\Phi(\cdot)$ denotes the standard normal complementary cumulative distribution function.
\end{lemma}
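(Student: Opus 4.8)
The plan is to establish the two inequalities in \eqref{eq:u-bound} separately, and both will follow from two elementary geometric facts about $\set_r$. First, since $\bx_r^*\coloneqq(r,0,\ldots,0)$ is the closest point of the closed convex set $\set_r$ to the origin, the variational characterization of the metric projection gives $\langle\bx_r^*,\bx\rangle\geq\|\bx_r^*\|^2=r^2$ for every $\bx\in\set_r$, so $\set_r\subseteq\{\bx:x_1\geq r\}$ and therefore $d(\bx,\set_r)\geq(r-x_1)^+$ for all $\bx\in\bbR^n$; call this fact (A). Second, since $\bx_r^*\in\set_r$, the sublevel set $\{\bx:d(\bx,\set_r)\leq u\}$ contains the closed Euclidean ball $B(\bx_r^*,u)\coloneqq\{\bx:\|\bx-\bx_r^*\|\leq u\}$; call this fact (B). I will also use freely that $h_r$ is nondecreasing, so that $h_r(a)>\delta^2$ forces $u_r<a$ and $h_r(a)<\delta^2$ forces $u_r>a$.

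Before the two bounds I would record that $u_r$ is well defined once $r$ is large: $h_r$ is continuous, nondecreasing on $[0,\infty)$ and strictly increasing on $(0,\infty)$ with $h_r(0)=0$, while $\sup_u h_r(u)=\sE_0[d(\bX,\set_r)^2]\to\infty$ because, by fact (A), $d(\bX,\set_r)\geq(r-\|\bX\|)^+\to\infty$ $\sP_0$-almost surely; hence $\delta^2$ eventually lies in the range of $h_r$ and $h_r(u_r)=\delta^2$. For the lower bound $M<r-u_r$ it then suffices, by monotonicity, to show $h_r(r-M)>\delta^2$ for all large $r$, which I would do by restricting the defining expectation to the fixed box $G\coloneqq\{M+1\leq X_1\leq M+2\}\cap\{X_2^2+\cdots+X_n^2\leq1\}$. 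For $\bX\in G$ and $r\geq M+2$ one has $\|\bX-\bx_r^*\|^2=(r-X_1)^2+X_2^2+\cdots+X_n^2\leq(r-M-1)^2+1\leq(r-M)^2$, so by fact (B), $d(\bX,\set_r)\leq r-M$ and the indicator in $h_r(r-M)$ is satisfied, while fact (A) gives $d(\bX,\set_r)\geq r-X_1\geq r-M-2$. Hence $h_r(r-M)\geq(r-M-2)^2\,\sP_0(G)$; since $\sP_0(G)$ is a positive constant independent of $r$ and $(r-M-2)^2\to\infty$, we get $h_r(r-M)>\delta^2$, i.e.\ $u_r<r-M$, for all large $r$. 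Taking $M=1$ in particular yields $u_r<r$ for all large $r$.

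For the upper bound I would bound $h_r$ crudely from above: on the event $\{d(\bX,\set_r)\leq u_r\}$ the integrand is at most $u_r^2$, and by fact (A) that event is contained in $\{X_1\geq r-u_r\}$, so
\[
\delta^2=h_r(u_r)=\sE_0[d(\bX,\set_r)^2;\,d(\bX,\set_r)\leq u_r]\leq u_r^2\,\sP_0(X_1\geq r-u_r)=u_r^2\,\bar\Phi(r-u_r)<r^2\,\bar\Phi(r-u_r),
\]
where the last inequality is strict because $0<u_r<r$. Thus $\bar\Phi(r-u_r)>\delta^2/r^2$, and since $r-u_r>0$ and $\bar\Phi$ is a strictly decreasing bijection of $\bbR$ onto $(0,1)$, applying $\bar\Phi^{-1}$ gives $r-u_r<\bar\Phi^{-1}(\delta^2/r^2)$. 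Taking $r_0$ large enough that every largeness requirement invoked above holds completes the proof.

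The argument is short, and I do not anticipate genuine analytic difficulty. The only points requiring care are the calibration and bookkeeping around the restriction box $G$: it must sit far enough from $\set_r$ that $d(\cdot,\set_r)$ is of order $r$ on it, yet close enough to the origin that $\sP_0(G)$ stays bounded away from zero, and one must verify the inclusion $G\subseteq B(\bx_r^*,r-M)$ so that the distance constraint genuinely holds; and the (easy but essential) observation $u_r<r$ must be in hand before the upper bound, so that the final inequality in the displayed chain is strict rather than merely weak.
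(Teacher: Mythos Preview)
Your proof is correct. The upper bound argument is essentially identical to the paper's: both bound $h_r(u_r)\leq u_r^2\,\bar\Phi(r-u_r)$ via the inclusion $\{d(\bX,\set_r)\leq u_r\}\subseteq\{X_1\geq r-u_r\}$ and then replace $u_r^2$ by $r^2$.

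For the lower bound you take a genuinely different and more elementary route. The paper argues by contradiction: assuming $u_r\geq r-M$ along a sequence $r\to\infty$, it bounds $\delta^2/r^2$ from below by an expectation over $\{\|\bX\|\leq K\}$, verifies via a construction with an auxiliary parameter $t_r$ that points with $x_1>M$ eventually satisfy $d(\bx,\set_r)\leq r-M$, and applies Fatou's lemma to obtain a strictly positive limit, contradicting $\delta^2/r^2\to 0$. Your argument is direct: you fix a concrete compact region $G$, use the single observation that $\bx_r^*\in\set_r$ (hence $B(\bx_r^*,r-M)$ sits inside the sublevel set) to certify the indicator, and read off $h_r(r-M)\geq(r-M-2)^2\sP_0(G)\to\infty$. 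This avoids both the contradiction framing and the limiting argument, and as a bonus yields the intermediate fact $u_r<r$ as an immediate special case rather than needing it separately. The paper's approach, by contrast, is somewhat more robust to the precise geometry of $\set_r$ near $\bx_r^*$ since it does not rely on the ball $B(\bx_r^*,u)$ being inside the sublevel set, though in the present setting that distinction is moot.
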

Observe that $r-u_r$ represents the distance between the origin and the inflated version of $\set_r$. Hence, by the first inequality in \eqref{eq:u-bound}, Lemma~\ref{lem:u} confirms that the inflated superset moves away from the origin as $r$ increases, which suggests that \emph{$p_r$ in \eqref{eq:def-pr} is again a rare-event probability}. This motivates us to develop an efficient rare-event simulation algorithm for estimating this probability.

Furthermore,  as shown in Appendix \ref{app:main}, $\bar\Phi^{-1}(\delta^2/r^2)$ in \eqref{eq:u-bound} grows sublinearly as $r\to\infty$. Consequently, the second inequality in \eqref{eq:u-bound} implies that this distance diverges at a sublinear rate. This indicates that the worst-case probability $p_r$ decays slower than the exponential rate of the nominal probability $\sP_0(\bX\in\set_r)$. We formalize this observation in the following theorem, which characterizes the asymptotic lower bound for $p_r$ as $r\to\infty$.
\begin{theorem}[Asymptotic Behavior of $p_r$]\label{thm:prob-asymp}
For any $\delta>0$,	$\liminf_{r\to\infty}r^2 p_r\geq \delta^2$.
\end{theorem}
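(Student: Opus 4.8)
The plan is to read the result straight off the duality identity of Lemma~\ref{lem:dual}, combined with a trivial truncation bound and the crude estimate $u_r<r$ supplied by Lemma~\ref{lem:u}. Applying Lemma~\ref{lem:dual} to the set $\set_r$ gives $p_r=\sP_0(d(\bX,\set_r)\le u_r)$ together with the defining equation $h_r(u_r)=\delta^2$, i.e.
$$
\delta^2=\sE_0\big[d(\bX,\set_r)^2;\,d(\bX,\set_r)\le u_r\big].
$$
On the event $\{d(\bX,\set_r)\le u_r\}$ the integrand $d(\bX,\set_r)^2$ is at most $u_r^2$, so the right-hand side is bounded above by $u_r^2\,\sP_0(d(\bX,\set_r)\le u_r)=u_r^2\,p_r$. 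This yields the key inequality $p_r\ge\delta^2/u_r^2$.

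It then remains only to bound $u_r$ from above. Lemma~\ref{lem:u}, applied with (say) $M=1$, provides an $r_0>0$ such that $u_r<r-1<r$ for all $r\ge r_0$. Substituting this into $p_r\ge\delta^2/u_r^2$ gives $r^2p_r\ge\delta^2\,(r/u_r)^2>\delta^2$ for every $r\ge r_0$, and taking $\liminf$ as $r\to\infty$ finishes the proof. In fact the sharper bound $r-u_r<\bar\Phi^{-1}(\delta^2/r^2)$ in Lemma~\ref{lem:u}, being sublinear in $r$, forces $u_r/r\to1$, so the inequality $r^2p_r\ge\delta^2(r/u_r)^2$ is asymptotically tight from this side; I would expect the matching upper bound $\limsup_{r\to\infty} r^2p_r\le\delta^2$ to be provable by similar means, though only the stated lower bound is needed here.

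There is essentially no analytic obstacle: the whole argument hinges on the elementary truncation $d(\bX,\set_r)^2\le u_r^2$ on $\{d(\bX,\set_r)\le u_r\}$, so the only point needing care is that $u_r=h_r^{-1}(\delta^2)$ is well-defined and satisfies $u_r<r$ for large $r$. Both facts are handed to us by Lemma~\ref{lem:u}; were a self-contained justification desired, it reduces to checking $h_r(r)=\sE_0[d(\bX,\set_r)^2;\,d(\bX,\set_r)\le r]\to\infty$ as $r\to\infty$ — which holds because $d(\bX,\set_r)\ge (r-X_1)_+$ by the supporting-hyperplane property of the minimum-norm point $(r,0,\ldots,0)$ of $\set_r$, and $\sE_0[(r-X_1)^2;\,X_1\ge0]\sim r^2/2$ — so that $h_r^{-1}(\delta^2)$ must eventually lie below $r$.
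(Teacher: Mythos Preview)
Your proof is correct and follows essentially the same route as the paper: both arguments hinge on the truncation bound $\delta^2=h_r(u_r)\le u_r^2 p_r$ combined with the control on $u_r$ from Lemma~\ref{lem:u}. The only cosmetic difference is that the paper first establishes $u_r/r\to1$ (via the sublinearity of $\bar\Phi^{-1}(\delta^2/r^2)$, which it reuses later in the proof of Theorem~\ref{thm:bre}) and then passes to the limit, whereas you use the cruder $u_r<r$ directly---which, as you note, already yields $r^2p_r>\delta^2$ for all large $r$ and hence the claimed $\liminf$.
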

According to this theorem, achieving vanishing relative error (Definition~\ref{def:efficiency}) for the estimation of $p_r$ requires the construction of an $N$-sample-based unbiased estimator whose asymptotic variance decays at a rate faster than $r^{-4}$. In the next section, we propose a novel importance sampling estimator that satisfies this condition.

\section{Main Algorithm and Results}\label{sec:main analysis}
Lemma~\ref{lem:dual} allows us to compute the worst-case probability $p_*$ in two steps: (a) solving $h(u)=\delta^2$ to obtain ${u_*}$ and (b) evaluating $p({u_*})$. Both tasks involve the estimation of expectations under the nominal distribution $\sP_0$ defined over the rare-event sets of the form $\{\bx:d(\bx,\set)\leq u\}$ (see Section~\ref{sec:prelim}). Accordingly, in this section, we propose a comprehensive and tractable algorithm that addresses these two rare-event estimation steps and demonstrate that it achieves vanishing relative error.

\subsection{DRIS Algorithm}\label{subsec:DRIS}
For the above-mentioned tasks, sampling $\bX$ in the vicinity of the rare-event set $\{\bx:d(\bx,\set)\leq u\}$ is essential for any feasible $u$. We identify $X_1$ as the primary driver of the said rare event since $\{\bx:d(\bx,\cE)\leq u\}\subseteq\{\bx:x_1\geq x_1^*-u\}$ holds for all $u$. 
Moreover, the rare-event set $\{\bx:d(\bx,\set)\leq u\}$ is the Minkowski sum of two convex sets $\set$ and $\{\bx:\|\bx\|\leq u\}$, and therefore, is also convex.
Consequently, inspired by the conditional importance sampling method in \cite{ahnZwsc:23}, our importance sampling approach involves: (a) generating $X_1$ via $X_1 = x_1^* - u + Y/(x_1^* - u)$, with $Y$ drawn from the standard exponential distribution; and (b) sampling $(X_2,\ldots,X_n)$ from the standard normal distribution. 

We then define $\bZ=(Y,X_2,\cdots,X_n)^\top$ and denote the expectation with respect to its distribution by $\sE$. We also define a transformation $f_u:\bbR^n\rightarrow\bbR^n$ as 
\begin{equation}\label{eq:f_u}    
f_u(\bz)=\lt(x_1^*-u+\frac{z_1}{x_1^*-u},z_2,\cdots,z_d\rt)^\top,
\end{equation}
which maps $\bZ$ to $\bX$. Finally, let
\begin{equation}\label{eq:L_u}
L_u(\bz)\coloneqq\frac{\exp(-z_1^2/(2(x_1^*-u)^2)-(x_1^*-u)^2/2)}{(x_1^*-u)\sqrt{2\pi}}\sone\{z_1\geq 0\}    
\end{equation}
be the likelihood ratio associated with our importance sampling approach.
In this setup, it is easy to see that
\begin{equation}
 \lt\{~\begin{aligned} h(u)&=\sE[d(f_u(\bZ),\set)^2\sone\{d(f_u(\bZ),\set)\le u\}L_u(\bZ)];\\
 p(u)&=\sE[\sone\{d(f_u(\bZ),\set)\le u\}L_u(\bZ)].
 \end{aligned}\rt.
\end{equation}

This forms unbiased estimators for $h(u)$ and $p(u)$ and enables us to develop the following estimation procedure for $p_*$:
\begin{enumerate}[label=(\roman*)]
    \item Take $N$ iid copies of $\bZ$, denoted by $\{\bZ_i\}_{i=1}^n$;
    \item Let $H(\cdot,u) \coloneqq d(f_u(\cdot),\set)^2\sone\{d(f_u(\cdot),\set)\leq u\}L_u(\cdot)$ for $u\geq0$ and define an estimate of $h(\cdot)$ as
\begin{equation}\label{eq:h_hat_N}
	\widehat h_N(u)=\frac{1}{N}\sum_{i=1}^{N} H(\bZ_i,u)~~\text{for}~u\geq0;
\end{equation}
    \item Compute the estimate $\widehat u_N \coloneqq \inf\{u:\widehat h_N(u)>\delta^2\}$ for $u_*$;
    \item Let $P(\cdot,u)\coloneqq\sone\{d(f_u(\cdot),\set)\leq u\}L_u(\cdot)$ and define an estimate of $p(u)$ as
\begin{equation}\label{eq:p_hat_N}
	\widehat p_N(u)=\frac{1}{N}\sum_{i=1}^{N} P(\bZ_i,u)~~\text{for}~u\geq0;
\end{equation}
\item Calculate the estimate of the worst-case probability $p_*$ by evaluating $\widehat p_N(\widehat u_N)$.
\end{enumerate}
We refer to this method and the estimator $\widehat p_N(\widehat u_N)$ as \emph{Distributionally Robust Importance Sampling (DRIS)} and the DRIS estimator, respectively. We detail its procedure in Algorithm \ref{alg:CIS-normal}. It is important to highlight that while Step (iii) involves root-finding, it requires no additional sampling costs, in contrast to typical root-finding procedures coupled with importance sampling \citep{HeJLF2024}. Moreover, the implementation of the DRIS method is computationally cheap: although it involves a root-finding procedure, the algorithm avoids costly operations elsewhere. Particularly, our sampling distributions (i.e., exponential and normal distributions) are straightforward to simulate, ensuring low sampling costs.

\begin{algorithm}[t]
\caption{Distributionally Robust Importance Sampling (DRIS)\label{alg:CIS-normal}}	
	\begin{algorithmic}[1]
		\State \textbf{Input:} $N$, $x_1^*$, and $\delta$
		\State Generate $N$ samples $\{y_i\}_{i=1}^N$ of $Y$  from the standard exponential distribution
		\State Take $N$ samples $\{(x_{2,i},\ldots,x_{n,i})\}_{i=1}^N$ of $(X_2,\ldots,X_n)$ from the $(n-1)$-dimensional standard normal distribution
        \State Set $\bz_i=(z_{1,i},\ldots,z_{n,i})$ for $i=1,\ldots,N$, where $z_{1,i}=y_i$ and $z_{j,i}=x_{j,i}$ for $j=2,\ldots,n$
        \State Set $h_N(u) = N^{-1}\sum_{i=1}^Nd(f_u(\bz_i),\set)^2\sone\{d(f_u(\bz_i),\set)\leq u\}L_u(\bz_i)$ for any $u\geq0$, where $f_u(\cdot)$ and $L_u(\cdot)$ are defined as in \eqref{eq:f_u} and \eqref{eq:L_u}, respectively
        \State Find $u_N=\inf\{u:h_N(u)>\delta^2\}$ via a (deterministic) root-finding procedure
		\State\textbf{Return:} $p_N = N^{-1}\sum_{i=1}^N\sone\{d(f_{u_N}(\bz_i),\set)\leq u_N\}L_{u_N}(\bz_i)$
	\end{algorithmic}
\end{algorithm}

\subsection{Efficiency of DRIS}
We now show that our proposed methodology has strong theoretical performance guarantees, satisfying the efficiency criterion in Definition~\ref{def:efficiency}. To that end, we first characterize the central limit theorem for the DRIS estimator $\widehat p_N(\widehat u_N)$ in the following result:

\begin{theorem}[Central Limit Theorem]\label{thm:clt}
	Suppose that there exist $u_L,u_U\in (0,x_1^*)$ such that ${u_*}, \widehat u_N\in[u_L,u_U]$ for all sufficiently large $N$. Then, 
	\begin{equation}
		\label{eq:CLT-var}
		\sqrt{N}(\widehat p_N(\widehat u_N)-p_*) \Rightarrow \cN\lt(0,\Var\lt(P(\bZ,{u_*})-\frac{H(\bZ,{u_*})}{u_*^2}\rt)\rt)~~\text{as}~N\to\infty.
	\end{equation}
\end{theorem}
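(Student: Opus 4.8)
The plan is to establish the CLT by treating the DRIS estimator $\widehat p_N(\widehat u_N)$ as a two-stage plug-in estimator and applying a delta-method-type argument. First I would fix a compact interval $[u_L, u_U]\subset(0,x_1^*)$ containing both $u_*$ and $\widehat u_N$ (for large $N$, by hypothesis). The key structural fact is that $u_*$ solves $h(u)=\delta^2$ and $\widehat u_N$ solves $\widehat h_N(u)=\delta^2$, so a first-order expansion gives $\widehat u_N - u_* \approx -(\widehat h_N(u_*) - h(u_*))/h'(u_*)$, where one must verify $h$ is continuously differentiable with $h'(u_*)>0$ on the relevant interval. Then $\widehat p_N(\widehat u_N) - p_* = (\widehat p_N(\widehat u_N) - \widehat p_N(u_*)) + (\widehat p_N(u_*) - p_*)$; the second term is a standard sample average and the first term, via differentiability of $p$ and uniform control of $\widehat p_N'$ near $u_*$, is approximately $p'(u_*)(\widehat u_N - u_*)$. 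Combining, $\sqrt N(\widehat p_N(\widehat u_N) - p_*) \approx \sqrt N\big(\widehat p_N(u_*) - p_* - (p'(u_*)/h'(u_*))(\widehat h_N(u_*) - h(u_*))\big)$, which is $N^{-1/2}\sum_i\big(P(\bZ_i,u_*) - p_* - (p'(u_*)/h'(u_*))(H(\bZ_i,u_*) - h(u_*))\big)$, an iid sum to which the classical CLT applies. It remains to identify the constant: differentiating $h(u)=\sE_0[d(\bX,\set)^2; d(\bX,\set)\le u]$ and $p(u)=\sP_0(d(\bX,\set)\le u)$ with respect to $u$, the coarea/layer-cake structure yields $h'(u) = u^2 p'(u)$ (the density of $d(\bX,\set)$ at $u$ times $u^2$ versus times $1$), so $p'(u_*)/h'(u_*) = 1/u_*^2$, matching the stated asymptotic variance $\Var(P(\bZ,u_*) - H(\bZ,u_*)/u_*^2)$.

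The steps, in order, would be: (1) Establish smoothness and monotonicity of $h$ on $[u_L,u_U]$ — show $h$ is $C^1$ with $h'(u)=u^2\rho(u)$ where $\rho$ is the density of $d(\bX,\set)$ under $\sP_0$, and argue $\rho(u_*)>0$ so that $h^{-1}$ is well-defined and Lipschitz near $\delta^2$; similarly $p'(u)=\rho(u)$. This uses that $\set$ is convex, closed, full-dimensional, so the distance function $d(\cdot,\set)$ has a well-behaved distribution with no atoms on $(0,\infty)$. (2) Prove a uniform law of large numbers and functional CLT for the empirical processes $\{\widehat h_N(u) - h(u): u\in[u_L,u_U]\}$ and $\{\widehat p_N(u) - p(u): u\in[u_L,u_U]\}$ — this is where the VC-type / empirical process machinery alluded to in the introduction enters, since the relevant function classes $\{H(\cdot,u)\}_u$ and $\{P(\cdot,u)\}_u$ are indexed by the one-dimensional parameter $u$ and are controlled via monotonicity and bounded-variation arguments (the indicator $\sone\{d(f_u(\bz),\set)\le u\}$ cuts out a family of sets of bounded VC dimension). (3) From the functional CLT and the consistency $\widehat u_N \to u_*$ (itself a consequence of steps (1)–(2)), derive the joint weak limit of $\sqrt N(\widehat h_N(u_*) - \delta^2, \widehat p_N(u_*) - p_*)$ and linearize $\widehat u_N - u_*$. (4) Assemble via a Slutsky/continuous-mapping argument, using stochastic equicontinuity to replace $\widehat p_N$ evaluated at the random point $\widehat u_N$ by its linearization at $u_*$, and read off the limiting variance.

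The main obstacle I expect is step (2)/(4): handling the estimator $\widehat p_N$ evaluated at the data-dependent argument $\widehat u_N$. One cannot simply Taylor-expand because $\widehat u_N$ is a random quantity depending on the same samples $\{\bZ_i\}$; the standard remedy is to prove stochastic equicontinuity of the normalized empirical process $u\mapsto \sqrt N(\widehat p_N(u) - p(u))$ on a shrinking neighborhood of $u_*$, so that $\sqrt N(\widehat p_N(\widehat u_N) - p(\widehat u_N)) = \sqrt N(\widehat p_N(u_*) - p(u_*)) + o_P(1)$. Establishing this requires a careful entropy bound on the function class $\{P(\cdot,u) - P(\cdot,u_*): |u-u_*|\le\epsilon\}$, controlling both the movement of the indicator set and the smooth variation of the likelihood ratio $L_u$; the dependence of $L_u$ on $u$ through the term $1/(x_1^*-u)$ is smooth and bounded away from singularity on $[u_L,u_U]$, so this is technical but tractable. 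A secondary subtlety is confirming the derivative identity $h'(u)=u^2 p'(u)$ rigorously — this needs the coarea formula or a direct argument that the conditional expectation of $d(\bX,\set)^2$ given $d(\bX,\set)=u$ equals exactly $u^2$, which is immediate since on that level set the squared distance is the constant $u^2$.
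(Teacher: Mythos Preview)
Your proposal is correct and follows essentially the same route as the paper's proof: establish that the function classes $\{H(\cdot,u)\}_u$ and $\{P(\cdot,u)\}_u$ are Donsker via a VC-type argument on the indicator piece, deduce consistency and then asymptotic linearity of $\widehat u_N$ through Z-estimation machinery, invoke stochastic equicontinuity to replace $\widehat p_N(\widehat u_N)$ by $\widehat p_N(u_*)$, and identify the ratio $p'(u_*)/h'(u_*)=1/u_*^2$ via the coarea formula (the paper's Lemma~\ref{lem: differentiability}). The one place where the paper is more explicit than your sketch is the VC bound: because $f_u(\bz)$ depends on $u$ non-monotonically through $x_1^*-u+z_1/(x_1^*-u)$, the paper supplies a dedicated zero-crossings lemma (Lemma~\ref{lem:vc-class}) showing that $u\mapsto d(f_u(\bz),\set)-u$ changes sign at most twice, which is the concrete mechanism behind the ``bounded-variation'' intuition you gesture at.
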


It is straightforward to verify that the central limit theorem stated above holds in our asymptotic regime with the sequence of sets $\{\set_r\}_{r>0}$. Specifically, for all $r>0$, the DRIS estimator for $p_r$ in \eqref{eq:def-pr} has asymptotic variance
$$
\sigma_r^2\coloneqq\Var\lt(\sone\{d(f_{u_r}(\bZ),\set_r)\leq u_r\}L_{u_r}(\bZ)\lt(1-\frac{d(f_{u_r}(\bZ),\set_r)^2}{u_r^2}\rt)\rt).
$$
Based on this asymptotic variance, the following theorem presents the main finding of this paper: a characterization of the asymptotic efficiency of our DRIS estimator. This result demonstrates the effectiveness of using a fixed set of samples for estimating both $h(\cdot)$ and $p(\cdot)$.
\begin{theorem}[Vanishing Relative Error]\label{thm:bre}
	For any $\delta>0$, $\limsup_{r\to\infty}r^2(r-u_r)^2\sigma_r^2/p_r^2<\infty$.
\end{theorem}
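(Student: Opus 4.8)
The goal is to bound the normalized asymptotic variance $r^2(r-u_r)^2\sigma_r^2/p_r^2$. Since Theorem~\ref{thm:prob-asymp} gives $\liminf_{r\to\infty} r^2 p_r \ge \delta^2 > 0$, it suffices to show $\limsup_{r\to\infty}(r-u_r)^2\sigma_r^2 < \infty$, i.e., that $\sigma_r^2$ decays at least like $(r-u_r)^{-2}$ (which by Lemma~\ref{lem:u} is at least as fast as $\bar\Phi^{-1}(\delta^2/r^2)^{-2}$, a sublinearly vanishing rate). Since $\sigma_r^2 = \Var(W_r)$ where $W_r := \sone\{d(f_{u_r}(\bZ),\set_r)\le u_r\}L_{u_r}(\bZ)(1 - d(f_{u_r}(\bZ),\set_r)^2/u_r^2)$, and $\Var(W_r)\le\sE[W_r^2]$, the plan is to bound $\sE[W_r^2]$ directly. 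The integrand is supported on $\{d(f_{u_r}(\bZ),\set_r)\le u_r\}$ and the factor $(1 - d(f_{u_r}(\bZ),\set_r)^2/u_r^2)$ lies in $[0,1]$ there, so $W_r^2 \le L_{u_r}(\bZ)^2\sone\{d(f_{u_r}(\bZ),\set_r)\le u_r\}$.

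The first main step is to rewrite this second-moment bound back in terms of the nominal distribution. Changing variables from $\bZ$ to $\bX = f_{u_r}(\bZ)$ converts one power of $L_{u_r}$ into the nominal density, leaving $\sE[W_r^2] \le \sE_0[L_{u_r}(f_{u_r}^{-1}(\bX))\,;\, d(\bX,\set_r)\le u_r]$. Writing $a_r := x_1^* - u_r = r - u_r$ (recall $x_1^* = r$ for $\set_r$), the likelihood ratio along the first coordinate is the ratio of the standard-normal density to the shifted-exponential sampling density, which on the event $\{X_1 \ge a_r\}$ (implied by $d(\bX,\set_r)\le u_r$, since $\{\bx : d(\bx,\set_r)\le u_r\}\subseteq\{x_1 \ge r - u_r\} = \{x_1 \ge a_r\}$) equals $a_r\sqrt{2\pi}\,\phi(X_1)\exp(a_r^2/2 + a_r(X_1 - a_r)) = a_r\sqrt{2\pi}\,\exp(-X_1^2/2 + a_r X_1 - a_r^2/2) = a_r\sqrt{2\pi}\,\phi(X_1 - a_r)$, up to constants — the key point being that it is bounded by $a_r\sqrt{2\pi}$ times a Gaussian-type factor that is at most $O(1)$ for $X_1$ near $a_r$, which is exactly where the tilted mass concentrates. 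So $\sE[W_r^2] \le C\, a_r \,\sP_0(d(\bX,\set_r)\le u_r)\cdot(\text{correction}) = C\,a_r\,p_r\cdot(\text{correction})$, and combined with $p_r = O(r^{-2})$ this would give $(r-u_r)^2\sigma_r^2 = O(a_r^3 / r^2)$ — which is NOT obviously bounded, so the crude bound is too lossy and the argument must be sharpened.

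The fix, and the technical heart of the proof, is to exploit the factor $(1 - d(f_{u_r}(\bZ),\set_r)^2/u_r^2)$ rather than discarding it: on the bulk of the tilted measure, $d(f_{u_r}(\bZ),\set_r)$ is close to $u_r$ (the worst-case optimal transport plan pushes mass onto the boundary of the inflated set), so this factor is small, providing the extra $a_r$-type gain needed. Concretely, I would decompose the event $\{d \le u_r\}$ into a thin shell $\{u_r - \epsilon_r \le d \le u_r\}$ and its complement; on the shell the correction factor is $O(\epsilon_r/u_r)$, and on $\{d \le u_r - \epsilon_r\}$ one uses that this is a strictly more extreme rare event whose probability (and tilted second moment) is smaller by a factor that beats $a_r$. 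Choosing $\epsilon_r$ to balance the two pieces — guided by the relation $h_r(u_r) = \delta^2$ from Lemma~\ref{lem:dual}, which pins down how much $\sP_0$-mass sits within distance $u_r$ versus within distance $u_r - \epsilon_r$ of $\set_r$ — should yield $\sigma_r^2 = O(a_r^{-2} r^{-2})$ or better, hence $r^2 a_r^2 \sigma_r^2 = O(1)$. I expect this balancing step to be the main obstacle: it requires precise two-sided control of $p_r(u) = \sP_0(d(\bX,\set_r)\le u)$ and of $h_r(u)$ near $u = u_r$, presumably via the same Gaussian tail expansions and VC/empirical-process-free direct estimates already developed for Lemma~\ref{lem:u} and Theorem~\ref{thm:prob-asymp} in Appendix~\ref{app:main}, applied now to a whole neighborhood of $u_r$ rather than at the single point $u_r$.
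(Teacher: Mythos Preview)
Your opening reduction is off by a factor of $r^4$. From $\liminf_{r\to\infty} r^2 p_r \ge \delta^2$ you only get $1/p_r^2 \le r^4/\delta^4$ eventually, so $r^2(r-u_r)^2\sigma_r^2/p_r^2 \le r^6(r-u_r)^2\sigma_r^2/\delta^4$. You therefore need $\sigma_r^2 = O\bigl(r^{-6}(r-u_r)^{-2}\bigr)$, not the $O\bigl(r^{-2}(r-u_r)^{-2}\bigr)$ you aim for at the end; with your stated target the argument cannot close. Relatedly, your likelihood-ratio computation is inverted in scale: with $a_r=r-u_r$, the ratio of the standard normal density to the shifted-exponential sampling density is $\ell_{u_r}(x_1)=\dfrac{e^{-a_r^2/2}}{a_r}\cdot\dfrac{e^{-(x_1-a_r)^2/2}}{\sqrt{2\pi}}$, so $\sup_{x_1}\ell_{u_r}(x_1)\sim\bar\Phi(a_r)$ is exponentially small, not $O(a_r)$ as your expression $a_r\sqrt{2\pi}\,\phi(X_1-a_r)$ would have it. Once this is corrected the crude bound reads $\sE[W_r^2]\le C\,\bar\Phi(a_r)\,p_r$, and the entire analysis hinges on bounding $\bar\Phi(a_r)$ from above.

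That upper bound is the step you never carry out. Lemma~\ref{lem:u} gives only the lower bound $\bar\Phi(a_r)>\delta^2/r^2$; to match it from above the paper proves a \emph{lower} bound on $h_r(u_r)$: it bounds $\sP_0\bigl(w<d(\bX,\set_r)^2\le u_r^2\bigr)$ from below by the probability that $\bX$ lies in a ball around $r\be_1$, passes to polar coordinates (for $n\ge2$), and integrates to obtain $\delta^2=h_r(u_r)\ge c\,r^2 e^{-a_r^2/2}/a_r$. This yields $e^{-a_r^2/2}/a_r=O(r^{-2})$ and is the technically heaviest part of the proof --- your phrase ``balance via $h_r(u_r)=\delta^2$'' is the right instinct, but its content is precisely this lower-bound computation, which you do not sketch. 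For the $(1-d^2/u_r^2)^2$ factor, the paper does not use a shell decomposition: it applies Cauchy--Schwarz to separate $\ell_{u_r}(X_1)^2$ from $(1-d^2/u_r^2)^4$, then integration by parts against $t\mapsto\sP_0(d(\bX,\set_r)\le t)\le\bar\Phi(r-t)$ on the second factor, cleanly extracting an additional $1/(a_r^2 u_r^2)$. Combined with $\bar\Phi(a_r)=O(r^{-2})$ this gives $\sigma_r^2=O\bigl(e^{-a_r^2}/(a_r^4 u_r^2)\bigr)=O\bigl(r^{-6}a_r^{-2}\bigr)$, which is exactly what the corrected target requires.
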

Since $r-u_r\to\infty$ as $r\to\infty$ (Lemma~\ref{lem:u}), the preceding theorem shows that the relative error of the DRIS estimator asymptotically changes at a rate at most $r^{-1}(r-u_r)^{-1}$ as $r\to\infty$, implying that the DRIS estimator achieves vanishing relative error.

\section{Numerical Experiments}\label{sec:numerical}
In this section, we conduct numerical experiments to validate the performance of the proposed method. To numerically compare the DRIS method with the application of existing Monte Carlo methods, we report two performance indicators for each experiment conducted below: variance ratio (VR) and efficiency ratio (ER). For a crude Monte Carlo estimator $Z^{\tt MC}$ with runtime $\tau^{\tt MC}$ and a target estimator $Z$ with runtime $\tau$, we define $\text{VR}\coloneqq\Var(Z^{\tt MC})/\Var(Z)$ and $\text{ER}\coloneqq\text{VR}\times \tau^{\tt MC}/\tau$. We also report the relative error of an estimator $Z$ at the 95\% confidence level defined as $1.96\sqrt{\Var(Z)}/\sE[Z]$. While ER is often considered a more comprehensive measure of efficiency, computation time is sensitive to hardware performance and implementation details; therefore, we present VR as a critical complementary metric.

\subsection{Experimental Setups}\label{subsec:experiment setup}
We use the following two examples for our numerical experiments.

{\sf\textbf{A toy example.}} We first consider a simple two-dimensional setup where the target set is given by
$\set_r = \{\bx\in\bbR^2:x_1-5x_2\geq r, x_1+5x_2\geq r\}$ 
and the radius of the 2-Wasserstein ball is set as $\delta = 0.001$.
We obtain the estimates of $\widehat u_N$ and $\widehat p_N(\widehat u_N)$ using the sample size of $10^7$ and replicate the entire procedure for $100$ times to calculate the average runtime and variance for each algorithm.  To the
best of our knowledge, there are no particular simulation methods developed to estimate Wasserstein distributionally robust rare-event probabilities. Hence, we compare the performance of the DRIS method with those of crude Monte Carlo (MC) and classical exponential twisting (ET) schemes, both of which are applied to estimate $h(\cdot)$ and $p(\cdot)$ analogously to the DRIS method in \eqref{eq:h_hat_N} and \eqref{eq:p_hat_N}.

{\sf\textbf{Portfolio loss probabilities.}} We next revisit Example~\ref{ex:portfolio} in Section~\ref{sec:problem} to estimate portfolio loss probabilities. We construct a portfolio consisting of $n=5$ uncorrelated underlying assets, adopting the parameter settings from \cite{Glasserman2000-var-red}. Specifically, we assume 250 trading days per year, a risk-free rate of 5\%, and $\rd t=0.04$. Each underlying asset has an initial value $S_0=100$ and volatility $\sigma=0.3$. For each asset, the portfolio holds long positions in 10 at-the-money call options and 5 at-the-money put options. All options have a half-year maturity. The loss threshold $\ell$ is set to 120 in all cases. To align with our rare-event setting, we scale the risk factor $\bX$ by $r^{-1}$ for various values of $r$.
Finally, we set $\delta = 0.01$ and use the same benchmarks, sample size, and number of macroreplications as in the previous toy example.

\subsection{Summary of the Numerical Results}

Tables~\ref{tab:polyhedron} and~\ref{tab:portfolio} report the estimates of $u_r$ and $p_r$ and the runtimes of the algorithms, along with the corresponding 95\% relative error, VR and ER, for the two examples described in Section~\ref{subsec:experiment setup}. 
In all cases we consider, our proposed method completely dominates the two benchmarks, demonstrating greater variance reduction and higher efficiency. This significant performance gap between DRIS and the other two methods, which widens as $r$ increases, validates our theoretical results. Although ET performs competitively in our numerical experiments, its performance in these problems lacks theoretical justification, and more importantly, DRIS consistently yields superior results. The increased runtimes for ET and DRIS, compared to MC, arise because the root-finding procedure embedded in these algorithms requires transforming samples and solving the distance dependent on the evaluated $u$; in contrast, samples in the crude Monte Carlo algorithm remain unchanged.

\begin{table}[htbp]
	\centering\small
	\caption{Numerical results for the toy example in Section~\ref{subsec:experiment setup}}
    \label{tab:polyhedron}
	\begin{tabular}{ccccccc}\toprule
		Method&$r$&$u_r$ (95\% rel.~err.)&$p_r$ (95\% rel.~err.)&Time (sec)& $\VR$ &$\ER$\\ \midrule
		\multirow{4}*{MC}
		& $2$ & $0.0027$ ($1.62\%$) & $2.40\times 10^{-3}$ ($1.13\%$) & $11$ & -- & -- \\
 	    & $3$ & $0.0141$ ($3.29\%$) & $2.39\times 10^{-4}$ ($2.97\%$) & $12$ & -- & -- \\
 	    & $4$ & $0.0931$ ($8.11\%$) & $2.36\times 10^{-5}$ ($7.80\%$) & $12$ & -- & -- \\
 	    & $5$ & $0.5245$ ($12.43\%$) & $3.10\times 10^{-6}$ ($19.91\%$) & 13 & -- & -- \\
		\midrule
		\multirow{4}*{ET}
		& $2$ & $0.0027$ ($0.42\%$) & $2.40\times 10^{-3}$ ($0.29\%$) & $144$ & $16$ & $1.3$ \\
 	    & $3$ & $0.0147$ ($0.30\%$) & $2.40\times 10^{-4}$ ($0.27\%$) & $151$ & $125$ & $9.8$ \\
 	    & $4$ & $0.0965$ ($0.18\%$) & $2.31\times 10^{-5}$ ($0.15\%$) & $144$ & $2,600$ & $225$ \\
 	    & $5$ & $0.5163$ ($0.08\%$) & $3.08\times 10^{-6}$ ($0.07\%$) & $112$ & $78,036$ & $9,107$ \\
		\midrule
		\multirow{4}*{DRIS}
		& $2$ & $0.0027$ ($0.24\%$) & $2.41\times 10^{-3}$ ($0.16\%$) & $149$ & $48$ & $3.7$ \\
 	    & $3$ & $0.0146$ ($0.15\%$) & $2.40\times 10^{-4}$ ($0.13\%$) & $148$ & $559$ & $45$ \\
 	    & $4$ & $0.0965$ ($0.08\%$) & $2.31\times 10^{-5}$ ($0.08\%$) & $163$ & $10,108$ & $772$ \\
 	    & $5$ & $0.5162$ ($0.04\%$) & $3.08\times 10^{-6}$ (0.04\%) & $116$ & $220,943$ & $24,978$ \\
		\bottomrule
	\end{tabular}
\end{table}

\begin{table}[htbp]\small
	\centering
	\caption{Numerical results for estimating portfolio loss probabilities in Example~\ref{ex:portfolio}}
    \label{tab:portfolio}
	\begin{tabular}{ccccccc}\toprule
		Method&$r$&$u_r$ (95\% rel.~err.)&$p_r$ (95\% rel.~err.)&Time (sec)& $\VR$ &$\ER$\\ \midrule
		\multirow{3}*{MC}
		& $2$ & $1.42$ ($1.865\%$) & $1.05\times 10^{-4}$ ($2.499\%$) & $7$ & -- & -- \\
   	  & $3$ & $8.46$ ($2.278\%$) & $1.37\times 10^{-5}$ ($3.482\%$) & $7$ & -- & -- \\
   	  & $4$ & $24.50$ ($2.512\%$) & $4.40\times 10^{-6}$ ($3.362\%$) & $7$ & -- & -- \\
		\midrule
		\multirow{3}*{ET}
		& $2$ & $1.40$ ($0.056\%$) & $1.05\times 10^{-4}$ ($0.042\%$) & $48$ & $3,615$ & $526$ \\
   	  & $3$ & $8.60$ ($0.023\%$) & $1.35\times 10^{-5}$ ($0.016\%$) & $50$ & $48,120$ & $6,620$ \\
   	  & $4$ & $24.73$ ($0.013\%$) & $4.39\times 10^{-6}$ ($0.009\%$) & $51$ & $145,230$ & $19,182$ \\
		\midrule
		\multirow{3}*{DRIS}
		& $2$ & $1.40$ ($0.024\%$) & $1.05\times 10^{-4}$ ($0.034\%$) & $53$ & $5,269$ & $691$ \\
   	  & $3$ & $8.60$ ($0.009\%$) & $1.35\times 10^{-5}$ ($0.013\%$) & $54$ & $71,806$ & $9,212$ \\
   	  & $4$ & $24.73$ ($0.004\%$) & $4.39\times 10^{-6}$ ($0.007\%$) & $51$ & $227,647$ & $30,143$ \\
		\bottomrule
	\end{tabular}
\end{table}

\section{Concluding Remarks}\label{sec:conclude}
In this paper, we address the problem of efficiently estimating rare-event probabilities under distributional model risk. Leveraging strong duality results in Wasserstein distributionally robust optimization, we formulate a novel, computationally tractable importance sampling procedure called DRIS, which yields significant variance reduction in estimating the said probabilities. We rigorously prove that the proposed DRIS estimator achieves vanishing relative error, which is regarded as the strongest notion of efficiency in the context of rare-event simulation. All our numerical experiments support these theoretical findings.

As the first methodological framework specifically designed to estimate rare-event probabilities under distributional uncertainty, our proposed approach relies on specific modeling assumptions that suggest several interesting avenues for future research. Firstly, we focus on convex sets as target events, motivated by several examples in the relevant literature. Nevertheless, extending our methodology to non-convex target sets, while challenging, would substantially expand its practical applicability. Secondly, we restrict our focus to the case with Gaussian nominal distributions. While the framework extends to other elliptical nominal distributions as alluded to earlier, the theoretical performance in those cases remains to be verified. It would also be interesting to explore the cases with non-elliptical nominal distributions. Lastly, to ensure the tractability of our theoretical analysis, we use the 2-Wasserstein ball to define the distributional uncertainty set. Relaxing this constraint would be a promising direction, as the duality result in Lemma~\ref{lem:dual} generalizes to a broader class of uncertainty sets, including $p$-Wasserstein balls with $p\geq1$. 
\appendix
\section{Proofs of the Theoretical Results}\label{app:main}

\begin{proof}[{Proof of Lemma \ref{lem:u}.}]
	Fix $K>M>0$. Assume by contradiction that $u_r\geq r-M$ for some $r>M$.  Then, we observe that
	\begin{equation}\label{eq:contradict}
    \begin{aligned}
		\frac{\delta^2}{r^2}=\frac{h_r(u_r)}{r^2}
		&\geq\sE\lt[\frac{d(\bX,\set_r)^2}{r^2};d(\bX,\set_r)\leq r-M,\|\bX\|\leq K\rt].
	\end{aligned}
	\end{equation}
	Since $d(\cdot,\set_r)$ is $1$-Lipschitz, we have $d(\bx,\set_r)\geq d(\bzero,\set_r)-\|\bx\|\geq r- K$ for any $\bx$ satisfying $\|\bx\|\leq K$.  
    This implies that $\liminf_{r\to\infty}{d(\bx,\set_r)^2}/{r^2}=1$.

    Fix $\bx\in\bbR^n$ such that $x_1>M$. Then, 
	by letting $t_r\coloneqq{(\|\bx\|^2-M^2)}/{(2rx_1-2rM)}>0$, a straightforward calculation yields
    $\|\bx-rt_r\be_1\|=rt_r-M$.
    Thus, since $d(\cdot,\set_r)$ is $1$-Lipschitz, we have
    $d(\bx,\set_r)\leq d(rt_r\be_1,\set_r)+\|\bx-rt_r\be_1\|=r-rt_r+rt_r-M=r-M$ for all sufficiently large $r$ such that $t_r\in(0,1)$.
    Accordingly, by applying Fatou's lemma on \eqref{eq:contradict}, we obtain
	$$\liminf_{r\to\infty}\frac{\delta^2}{r^2}\geq \sE_0\lt[\liminf_{r\to\infty}\frac{d(\bX,\set_r)^2}{r^2};d(\bX,\set_r)\leq r-M,\|\bX\|\leq K\rt]\geq \sP(X_1>M,\|\bX\|\leq K)>0.$$
	This contradicts the fact that $\delta$ is a constant. Therefore, $u_r<r-M$ for all sufficiently large $r$. 

    Furthermore, it is straightforward that $p_r=\sP(d(\bX,\set_r)\leq u_r)\leq \sP(X_1\geq r-u_r) = \bar\Phi(r-u_r)$. Hence, we get $\delta^2=h_r(u_r)=\sE[d(\bX,\set_r)^2;d(\bX,\set_r)\leq u_r]\leq u_r^2p_r\leq r^2\bar\Phi(r-u_r)$ for all sufficiently large $r$. Consequently, the result follows.	
\end{proof}

\begin{proof}[{Proof of Theorem \ref{thm:prob-asymp}.}] 
By the asymptotic behavior of the Mills ratio for a standard normal distribution, we have $\sqrt{2\pi}x\bar\Phi(x)/\exp(-x^2/2)\to1$ as $x\to\infty$~\citep[see, e.g.,][]{Bartoszynski2021}. This implies that $x^2\bar\Phi(x)\to0$ as $x\to\infty$. Thus, by letting $x=\bar\Phi^{-1}(\delta^2/r^2)$, we have $r^{-1}\bar\Phi^{-1}(\delta^2/r^2)\to0$ as $r$ grows. Then, dividing both sides of \eqref{eq:u-bound} by $r$ and letting $r\to\infty$ yields $\lim_{r\to\infty}u_r/r=1$.
Furthermore, we observe that $$
h_r(u_r)=\sE_0[d(\bX,\set_r)^2;d(\bX,\set_r)\leq u_r]\leq u_r^2\sP_0(d(\bX,\set_r)\leq u_r)=u_r^2p_r.
$$ 
Consequently, $\liminf_{r\to\infty}r^2p_r\geq \delta^2/\lim_{r\to\infty}(u_r/r)^2=\delta^2$.
\end{proof}

\begin{proof}[{Proof of Theorem \ref{thm:clt}.}]	We prove the statement in four steps. In this proof, we denote by $\|\cdot\|_2$ the $L^2$ norm under the sampling distribution, i.e., $\|A\|_2 = \sqrt{\sE[A(\bZ)^2]}$ for any function $A:\bbR^n\to\bbR$. 

	\emph{Step 1: Uniform Convergence of $\widehat h_{N}$.} 
    In this step, we aim to prove the uniform convergence of $\widehat h_{N}$ in \eqref{eq:h_hat_N} over $\Theta\coloneqq[u_L,u_U]$. 
    Since every Donsker class satisfies the uniform law of large numbers~\citep[][page 130]{van1996weak}, it suffices to show that $\cH\coloneqq\{H(\cdot,u):u\in \Theta\}$ is Donsker.

    We define two function classes $\cH_1$ and $\cH_2$ as $\cH_1\coloneqq\{\bz\mapsto \lt(d(f_u(\bz),\set)\wedge u_U\rt)^2L_u(\bz):u\in\Theta\}$ and $\cH_2\coloneqq\{\bz\mapsto\sone\{d(f_u(\bz),\set)\leq u\}:u\in\Theta\}$. We observe that for any $u,v\in\Theta$,
    \begin{equation}
    \begin{aligned}
    &\lt|\lt(d(f_u(\bz),\set)\wedge u_U\rt)^2L_u(\bz)-\lt(d(f_v(\bz),\set)\wedge u_U\rt)^2L_v(\bz)\rt|\\
    &\leq\lt(d(f_u(\bz),\set)\wedge u_U\rt)^2\lt|L_u(\bz)-L_v(\bz)\rt|+|L_v(\bz)|\lt|\lt(d(f_u(\bz),\set)\wedge u_U\rt)^2-\lt(d(f_v(\bz),\set)\wedge u_U\rt)^2\rt|\\
    &\leq u_U^2\lt|L_u(\bz)-L_v(\bz)\rt|+\bar L\lt|\lt(d(f_u(\bz),\set)\wedge u_U\rt)^2-\lt(d(f_v(\bz),\set)\wedge u_U\rt)^2\rt|\\
    &\leq u_U^2\lt|L_u(\bz)-L_v(\bz)\rt|+2u_U\bar L\lt\|f_u(\bz)-f_v(\bz)\rt\|,
    \end{aligned}
    \end{equation}
    where the first inequality follows from the triangular inequality, the second inequality holds since $\bar L \coloneqq \sup_{\bz\in\bbR^n,u\in\Theta} L_u(\bz)<\infty$, and the last one is straightforward because $|a^2-b^2|\leq2c|a-b|$ for $a,b\in[0,c]$ and $c\geq0$, and $d(\cdot,\set)$ is $1$-Lipschitz. It can be easily checked that there exists a polynomial function $G$ satisfying $u_U^2\lt|L_u(\bz)-L_v(\bz)\rt|+2u_U\bar L\lt\|f_u(\bz)-f_v(\bz)\rt\|\leq G(\bz)|u-v|$ for all $\bz\in\bbR^n$ and $u,v\in\Theta$. Since $\|G\|_2<\infty$ and $\Theta$ is compact, $\cH_1$ is Donsker by Theorems 2.7.17 and 2.5.6 of \cite{van1996weak}.

    Given a collection $\cC$ of sets, its {VC-dimension}, denoted by $V(\cC)$, is the cardinality of the largest set $X$ such that $\lt|\{X\cap C:C\in\cC\}\rt|=2^{|X|}$. A function class $\cF$ is called a {VC-class} if the collection of all subgraphs
	$\{\{(\bz,t):t<f(\bz)\}:f\in\cF\}$ has a finite VC-dimension. 
    Suppose that $|\{\{(\bz_1, t_1), \dots, (\bz_m, t_m)\}\cap\{(\bz,t):t<\sone\{d(f_u(\bz),\set)\leq u\}\}:u\in\Theta\}| = 2^m$ for some $m$ points $(\bz_1, t_1), \dots, (\bz_m, t_m)\in(0,\infty)\times\bbR^{n-1}\times\bbR$. Since the condition $t<\sone\{d(f_u(\bz),\set)\leq u\}$ is nontrivial only when $t\in[0,1)$, we may choose $t_1 = \cdots = t_m = 0$ without loss of generality. In this case, the shattering condition on subgraphs is equivalent to shattering the points $\bz_1, \dots, \bz_m$ directly using the function values, i.e., $\lt|\{(\sone\{d(f_u(\bz_1),\set)\le u\}, \ldots, \sone\{d(f_u(\bz_m),\set)\le u\}):u\in\Theta\}\rt|=2^m$.

    On the other hand, by Lemma~\ref{lem:vc-class} in Appendix~\ref{app:lemma}, the set $\{u \in \Theta : d(f_u(\bz_i), \set) \le u\}$ is defined by at most 2 boundary points in $\Theta$. Hence, there exist at most $2m$ points in $\Theta$, denoted by $u_1,u_2,\ldots,u_{2m}$, such that $u_L=u_0\leq u_1\leq\cdots\leq u_{2m}\leq u_{2m+1}=u_U$ and the vector $(\sone\{d(f_u(\bz_1),\set)^2\le u\}, \ldots, \sone\{d(f_u(\bz_m),\set)^2\le u\})$ remains constant for any $u\in(u_i,u_{i+1})$ with $i=0,\ldots,2m$. Thus, $\lt|\{(\sone\{d(f_u(\bz_1),\set)\le u\}, \ldots, \sone\{d(f_u(\bz_m),\set)\le u\}):u\in\Theta\}\rt|\leq  2m+1$. Combining this with the above shattering condition leads to $2^m\leq 2m+1$. Therefore, $m$ must be finite, proving that $\cH_2$ is a VC-class. Furthermore, $\cH_2$ is uniformly bounded by $1$. Consequently, Theorems~2.6.7 and~2.5.2 of \cite{van1996weak} imply that $\cH_2$ is Donsker.

    Let $\phi(x,y)=xy$ for all $x,y\in\bbR$. Since $\cH_1$ and $\cH_2$ are uniformly bounded and Donsker and $\cH\subset\phi\circ(\cH_1,\cH_2)\coloneqq\{\bz\mapsto \phi(g_1(\bz),g_2(\bz)):g_1\in\cH_1,g_2\in\cH_2\}$, $\cH$ is also Donsker by Corollary~2.10.15 and Theorem~2.10.1 of \cite{van1996weak}.
    
    \emph{Step 2. Convergence of $\widehat u_N$.} Since $h(\cdot)$ is a strictly increasing function satisfying $h(u_*)=\delta^2$, we have $c(\varepsilon)\coloneqq\inf_{|u-{u_*}|>\varepsilon}|h(u)-\delta^2|/2>0$ for any $\varepsilon>0$. Fix $\varepsilon>0$. If $\sup_{u\in\Theta}|h(u)-\widehat h_N(u)|\leq  c(\varepsilon)$, then $|h(\widehat u_N)-\delta^2|\leq \max\{\lim_{u\uparrow\widehat u_N}|h(u)-\widehat h_N(u)|,\lim_{u\downarrow\widehat u_N}|h(u)-\widehat h_N(u)|\}\leq c(\varepsilon)$,
	which implies that $|\widehat u_N-{u_*}|\leq \varepsilon$. Accordingly,
	$\sP(\sup_{u\in\Theta}|h(u)-\widehat h_N(u)|\leq c(\varepsilon))\leq \sP(|\widehat u_N-{u_*}|\leq \varepsilon)$. By the uniform convergence of $\widehat h_N$ in Step~1, $\lim_{N\to\infty}\sP(\sup_{u\in\Theta}|h(u)-\widehat h_N(u)|\leq c(\varepsilon))=1$.
	Hence, $\widehat u_N\rightarrow {u_*}$ in probability as $N\to\infty$. 
	
	\emph{Step 3. Asymptotic Normality for $\widehat u_N$.} 
    We define $H_1(\bz,u)=(d(f_u(\bz),\set)\wedge u_U)^2L_u(\bz)$ and $H_2(\bz,u) = \sone\{d(f_u(\bz),\set)\leq u\}$, implying that 
	$H(\bz,u) = H_1(\bz,u)H_2(\bz,u)$  for $\bz\in(0,\infty)\times\bbR^{n-1}$ and $u\in\Theta$.
    We observe that $d(f_{{u_*}}(\bz),\set)={u_*}$ if and only if $f_{{u_*}}(\bz)$ lies on the boundary of $\{\bz:d(\bz,\set)\leq {u_*}\}$. Additionally, since $f_{{u_*}}$ is an invertible affine transformation, it can be checked that $\sP(d(f_{{u_*}}(\bZ),\set)={u_*})=0$. 
    
    Fix $\omega$ in the sample space such that $d(f_{{u_*}}(\bZ(\omega)),\set)\neq{u_*}$. Then, since $u\mapsto d(f_{u}(\bZ(\omega)),\set) -u$ is continuous, there exists $\delta>0$ such that $H_2(\bZ(\omega),u)=H_2(\bZ(\omega),{u_*})$ for any $|u-{u_*}|<\delta$. Therefore, $H_2(\bZ,u)\to H_2(\bZ,{u_*})$ almost surely as $u\rightarrow {u_*}$. 
    Thus, by the continuity of $H_1(\bz,\cdot)$ and the continuous mapping theorem, $\|H(\cdot,u)-H(\cdot,{u_*})\|_2^2=\|H_1(\cdot,u)H_2(\cdot,u)-H_1(\cdot,{u_*})H_2(\cdot,{u_*})\|_2^2\to0$
    as $u\rightarrow {u_*}$. 
    We also note that $\{H(\cdot,u)-H(\cdot,{u_*}): |u-{u_*}|<\delta, u\in\Theta\}$ is Donsker for some $\delta>0$ since $\cH$ is Donsker and by Theorem~2.10.8 of \cite{van1996weak}.

    Let $\Psi_N(u)\coloneqq\widehat h_N(u)-\delta^2$ and $\Psi(u)\coloneqq h(u)-\delta^2$. Then, by the central limit theorem, we have
	$\sqrt{N}(\Psi_N-\Psi)({u_*})=N^{-1/2}\sum_{i=1}^N(H(\bZ_i,{u_*})-\sE[H(\bZ,{u_*})])\Rightarrow \cN(0,\Var(H(\bZ,{u_*}))).$ Furthermore, $\Psi'({u_*})=h'({u_*})\neq 0$ by Lemma~\ref{lem: differentiability} in Appendix~\ref{app:lemma}. Moreover, since $H(\bz,u)$ is uniformly bounded, it can be verified that $\Psi_N(\widehat u_N)=o_P(N^{-1/2})$ using the definition of $\widehat u_N$ and $\bZ_i$ is continuously distributed. Combining all these results with Lemma~3.3.5 and Theorem~3.3.1 of \cite{van1996weak}, we conclude that $\sqrt{N}h'({u_*})(\widehat u_N-{u_*})=-\sqrt{N}(\widehat h_N-h)({u_*})+o_P(1)$.

	\emph{Step 4. Asymptotic Normality for the Estimator.} Using the same arguments as in Steps 1 to 3, it can be shown that $\{P(\cdot,u)-P(\cdot,{u_*}):|u-{u_*}|<\delta, u\in\Theta\}$ is Donsker for some $\delta>0$, and $\|P(\cdot,u)-P(\cdot,{u_*})\|_2^2\rightarrow 0$ as $u\rightarrow{u_*}$. Thus, by using Lemma 3.3.5 of \cite{van1996weak} again, we have $\sqrt{N}\lt(\widehat p_N(\widehat u_N)-p(\widehat u_N)\rt)=\sqrt{N}\lt(\widehat p_N({u_*})-p({u_*})\rt)+o_P(1)$.
	Since $p(\cdot)$ is differentiable at ${u_*}$, the Taylor expansion implies that $\sqrt{N}(p(\widehat u_{N})-p({u_*})) = \sqrt{N}p'({u_*})(\widehat u_N-{u_*})+o_P(\sqrt{N}|\widehat u_N-{u_*}|)$.
	Combining these findings with the result of Step~3 and Lemma~\ref{lem: differentiability} in Appendix~\ref{app:lemma},  we obtain
	\begin{equation}
	\begin{aligned}
		\sqrt{N}(\widehat p_N(\widehat u_N)-p({u_*}))
		& = \sqrt{N}(\widehat p_N-p)({u_*})+\sqrt{N}p'({u_*})(\widehat u_N-{u_*})+o_P(\sqrt{N}|\widehat u_N-{u_*}|)+o_P(1)\\
		& = \sqrt{N}(\widehat p_N-p)({u_*})-\frac{\sqrt{N}}{u_*^2}(\widehat h_N-h)({u_*})+o_P(1),
	\end{aligned}
	\end{equation}
	where the last equality holds since $\sqrt{N}(\widehat u_N-{u_*})$ is bounded in probability by Step 3. Hence, by the central limit theorem and Slutsky's theorem, the desired result in \eqref{eq:CLT-var} follows.
\end{proof}
\begin{proof}[{Proof of Theorem \ref{thm:bre}.}]
	Since $\bx^*=\argmin_{\bx\in\set_r}\|\bx\|^2=r\be_1$ and $x_1^*=r$, we observe that \begin{equation}\label{eq:var-upper}
	\begin{aligned}
		\sigma_r^2
		&=\Var\lt(\sone\{d(f_{u_r}(\bZ),\set_r)\leq u_r\}L_{u_r}(\bZ)\lt(1-\frac{d(f_{u_r}(\bZ),\set_r)^2}{u_r^2}\rt)\rt)\\
		&\leq \sE\lt[L_{u_r}(\bZ)^2\lt(1-\frac{d(f_{u_r}(\bZ),\set_r)^2}{u_r^2}\rt)^2;d(f_{u_r}(\bZ),\set_r)\leq u_r\rt]\\
		&=\sE_0\Bigg[\ell_{u_r}(X_1) \lt(1-\frac{d(\bX,\set_r)^2}{u_r^2}\rt)^2;d(\bX,\set_r)\leq u_r\Bigg]\\
		&\leq \lt(\sE_0\big[\ell_{u_r}(X_1)^2:d(\bX,\set_r)\leq u_r\big]\sE_0\Bigg[\lt(1-\frac{d(\bX,\set_r)^2}{u_r^2}\rt)^4:d(\bX,\set_r)\leq u_r\Bigg]\rt)^{1/2},
	\end{aligned}
	\end{equation}
	where $\ell_u(x)\coloneqq e^{-x^2/2+(r-u)(x-(r-u))}/((r-u)\sqrt{2\pi})\sone{\{x\geq r-u\}}$ and the last inequality holds by the Cauchy–Schwarz inequality. 
    A simple calculation yields
	\begin{equation}
		\label{eq:upper-bound-2}
		\sE_0\big[\ell_{u_r}(X_1)^2;d(\bX,\set_r)\leq u_r\big]\leq \sE_0[\ell_{u_r}(X_1)^2;X_1\geq r- u_r]\leq\frac{e^{-3(r-u_r)^2/2}}{(2\pi)^{3/2}(r-u_r)^3}.
	\end{equation}
	Also, by using integration by parts, we have
    \begin{equation}\label{eq:upper-bound-3}
	\begin{aligned}
		\sE_0\lt[\lt(1-\frac{d(\bX,\set_r)^2}{u_r^2}\rt)^4:d(\bX,\set_r)\leq u_r\rt]
		&\leq \frac{8}{u_r^2}\int_{0}^{u_r}t\lt(1-\frac{t^2}{u_r^2}\rt)^3\sP_0(d(\bX,\set_r)\leq t)\rd t\\
		&\leq\frac{8}{u_r^2}\int_0^{u_r}t\lt(1-\frac{t^2}{u_r^2}\rt)^3\frac{e^{-(r-t)^2/2}}{\sqrt{2\pi}(r- t)}\rd t\\
		&\leq\frac{64e^{-(r-u_r)^2/2}}{\sqrt{2\pi}(r- u_r)u_r^4}\int_0^{u_r}(u_r-t)^3e^{-(r-u_r)(u_r-t)}\rd t\\
		&\leq\frac{64e^{-(r-u_r)^2/2}}{\sqrt{2\pi}(r- u_r)u_r^4}\int_0^{\infty}y^3e^{-(r-u_r)y}\rd y\\
		&\leq\frac{384e^{-(r-u_r)^2/2}}{\sqrt{2\pi}(r- u_r)^5u_r^4},
	\end{aligned}
	\end{equation}
    where the second inequality holds since $\sP_0(d(\bX,\set_r)\leq t)\leq \bar\Phi(r-t) \leq (2\pi)^{-1/2}e^{-(r-t)^2/2}/(r-t)$ for any $t\in[0,r]$, and the third inequality follows because $t(1-t^2/u_r^2)^3\leq 8(u_r-t)^3/u_r^2$ and $e^{-(r-t)^2/2}/(r-t)\leq e^{-(r-u_r)^2/2-(r-u_r)(u_r-t)}/(r-u_r)$ for all $t\in[0,u_r]$.
	By \eqref{eq:var-upper}, \eqref{eq:upper-bound-2}, and \eqref{eq:upper-bound-3}, we have
	\begin{equation}\label{eq:var-bound}
	\begin{aligned}
		\sigma_r^2\leq\lt(\frac{e^{-3(r-u_r)^2/2}}{(2\pi)^{3/2}(r-u_r)^3}\frac{384e^{-(r-u_r)^2/2}}{\sqrt{2\pi}(r- u_r)^5u_r^4}\rt)^{1/2}=\frac{4\sqrt 6e^{-(r-u_r)^2}}{\pi(r-u_r)^4u_r^2}.
	\end{aligned}
	\end{equation}

    Suppose that $n\geq 2$. Fix $w,u>0$ satisfying $w<u^2$. 
    Assume that $r-u<x_1< r-\sqrt{w}$ and $\|\bx-r\be_1\|\leq u$ for some $\bx\in\bbR^n$. Since  $r\be_1\in\set_r$, we have $d(\bx,\set_r)\leq u$. Let $\bar \bx = \argmin_{\by\in\set_r} \|\bx-\by\|$. Then, $\bar x_1\geq r$, and thus, $d(\bx,\set_r)\geq \bar x_1-x_1>\sqrt{w}$. Furthermore,  $\sP_0(\|\bX-r\be_1\|\leq u\,|\,X_1=x)$ is equal to the probability of a chi-squared random variable with $n-1$ degree of freedom not exceeding $u^2-(x-r)^2$ since $\|\bx-r\be_1\|^2 = (x_1-r)^2 + \sum_{i=2}^nx_i^2$ for any $\bx\in\bbR^n$. Accordingly, there exists $C>0$ such that
	\begin{equation}\label{eq:p-band-2}
	\begin{aligned}
    \sP_0(w<d(\bX,\set_r)^2\leq u^2)&\geq \sP_0(r-u<X_1<r-\sqrt{w},\|\bX-r\be_1\|\leq u)\\
		&=\int_{r-u}^{r-\sqrt{w}}\sP_0(\|\bX-r\be_1\|\leq u\,|\,X_1=x)\frac{e^{-x^2/2}}{\sqrt{2\pi}}\rd x\\
		&=C\int_{r-u}^{r-\sqrt{w}}\int_0^{u^2-(x-r)^2}t^{(n-3)/2}e^{-t/2}e^{-x^2/2}\rd t\rd x.
	\end{aligned}
	\end{equation}
	Using integration by parts, one can show that $h_r(u)=\int_0^{u^2}\sP(w<d(\bX,\set_r)^2\leq u^2)\rd w$. Then, by \eqref{eq:p-band-2},  
	\begin{equation}
	\begin{aligned}
    h_r(u)&\geq C\int_0^{u^2}\int_{r-u}^{r-\sqrt{w}}\int_0^{u^2-(x-r)^2}t^{(n-3)/2}e^{-t/2}e^{-x^2/2}\rd t\rd x\rd w\\		&=2C\int_0^u\int_0^{y^2}\int_0^{(u^2-y^2)^{1/2}}s^{n-2}e^{-s^2/2}e^{-(r-y)^2/2}\rd s\rd w \rd y\\
    &=2C\int_0^{u} y^2\int_0^{(u^2-y^2)^{1/2}}s^{n-2}e^{-s^2/2}e^{-(r-y)^2/2}\rd s\rd y\\
	&=2C\int_0^{u}\rho^{n+1}e^{-(r-\rho)^2/2}\int_0^{\pi/2}\cos(\theta)^2\sin(\theta)^{n-2}e^{-r\rho(1-\cos(\theta))}\rd\theta \rd \rho,\label{eq:polar}
	\end{aligned}
	\end{equation}
    where the first equality holds by interchanging the first two integrals and setting $s=\sqrt{t}$ and $y=r-x$, and the last equality follows from setting $s = \rho\sin(\theta)$ and $y = \rho\cos(\theta)$.
	
    Let $\varepsilon_r = 1/u_r$. Then, $0\leq\varepsilon_r/r\leq 1$ for all sufficiently large $r$. Thus, for all $\rho\in(0,u)$, the inner integral of the last expression in \eqref{eq:polar} satisfies
	\begin{equation}
	\begin{aligned}
		\int_0^{\pi/2}\cos(\theta)^2\sin(\theta)^{n-2}e^{-r\rho(1-\cos(\theta))}\rd\theta
		&\geq\int_0^{\arccos(1-\varepsilon_r/r)}\cos(\theta)^2\sin(\theta)^{n-2}e^{-r\rho(1-\cos(\theta))}\rd\theta\\
		&\geq\lt(1-\frac{\varepsilon_r}{r}\rt)^2e^{-\varepsilon _r\rho}\int_0^{\arccos(1-\varepsilon_r/r)}\sin(\theta)^{n-2}\rd\theta\\
		&=\lt(1-\frac{\varepsilon_r}{r}\rt)^2e^{-\varepsilon_r \rho}\int_0^{\varepsilon_r/r}{(2\alpha)}^{(n-3)/2}{(1-\alpha/2)}^{(n-3)/2}\rd \alpha\\
		&\geq\kappa_re^{-\varepsilon_r \rho},\label{eq:polar-low}
	\end{aligned}
	\end{equation}
    where $\kappa_r = (1-{\varepsilon_r}/{r})^2({1-{\varepsilon_r}/{(2r)}})^{(n-3)/2}({2\varepsilon_r}/{r})^{(n-1)/2}/(n-1)$, and  the equality stems from setting $\theta = \arccos(1-\alpha)$. 
	Hence, by \eqref{eq:polar} and using integration by parts twice, we obtain
	\begin{equation}
		\begin{aligned}
			h_r(u_r)&\geq 2C\kappa_r\int_0^{u_r}\rho^{n+1}e^{-(r-\rho)^2/2-\varepsilon_r\rho}\rd \rho\\		&=2C\kappa_r\lt(I_r(u_r)+\int_0^{u_r}\frac{e^{-(r-\rho)^2/2-\varepsilon_r\rho}}{(r-\varepsilon_r-\rho)^2}\rho^{n-1}\lt(n(n+1)+\frac{3(n+1)\rho}{r-\varepsilon_r-\rho}+\frac{3\rho^2}{(r-\varepsilon_r-\rho)^2}\rt)\rd\rho\rt)\\
			&\geq 2C\kappa_rI_r(u_r),
		\end{aligned}
	\end{equation}
    where $$
    I_r(u_r)\coloneqq\frac{e^{-(r-u_r)^2/2-1}}{r-u_r^{-1}-u_r}u_r^{n+1}\lt(1-\frac{n+1}{u_r(r-u_r^{-1}-u_r)}-\frac{1}{(r-u_r^{-1}-u_r)^2}\rt).
    $$

    Recall that $r-u_r\to\infty$ and $u_r/r\to1$ as $r\to\infty$ by Lemma~\ref{lem:u} and the proof of Theorem~\ref{thm:prob-asymp}. Thus, we have $\kappa_r\sim r^{-n+1}2^{(n-1)/2}/(n-1)$ and $I_r(u_r)\sim{r^{n+1}e^{-(r-u_r)^2/2-1}}/({r-u_r^{-1}-u_r})$, where $\sim$ represents asymptotic equivalence  as $r\to\infty$.
    Since $\delta^2=h_r(u_r)$ for all $r$, the above inequality implies that
    \begin{equation}
	\begin{aligned}
		\limsup_{r\to\infty}r^2\frac{e^{-(r-u_r)^2/2-1}}{r-u_r^{-1}-u_r}\leq C_*\delta^2,
	\end{aligned}
	\end{equation}
    where $C_*=(n-1)/({2^{(n+1)/2}C})$. Finally, combining this result with \eqref{eq:var-bound} and Theorem \ref{thm:prob-asymp}, we get
	\begin{equation}\label{eq:vre_proof}
	\begin{aligned}
		\limsup_{r\to\infty}r^2(r-u_r)^2\frac{\sigma_r^2}{p_r^2}&\leq\frac{4\sqrt 6}{\pi}\frac{1}{\liminf_{r\to\infty}r^4p_r^2}\limsup_{r\to\infty}r^4\frac{e^{-(r-u_r)^2}}{(r-u_r)^2}\\
		&\leq\frac{4e^2\sqrt 6}{\pi\delta^4}\lt(\limsup_{r\to\infty}r^2\frac{e^{-(r-u_r)^2/2-1}}{r-u_r^{-1}-u_r}\rt)^2\\
		&\leq \frac{4C_*^2e^2\sqrt 6}{\pi}<\infty.
	\end{aligned}
	\end{equation}
    
	When $n=1$, we obtain the following relationship using the same argument as in \eqref{eq:polar}:
    $$
    \begin{aligned}
    h_r(u_r)&=\int_0^{u_r^2}\sP(w<d(\bX,\set_r)^2\leq u_r^2)\rd z\\
		&=(2\pi)^{-1/2}\int_0^{u_r^2}\int_{r-u_r}^{r-\sqrt{w}}e^{-x^2/2}\rd x\rd w\\
        &=(2\pi)^{-1/2}\int_0^{u_r}y^2e^{-(r-y)^2/2}dy.
    \end{aligned}
        $$
    Using integration by parts, the right-hand side is bounded from below by
    $$
    \frac{1}{\sqrt{2\pi}}\frac{e^{-(r-u_r)^2/2}}{r-u_r}\lt(\frac{u_r}{r}\rt)^2r^2\lt(1-\frac{2}{u_r(r-u_r)}-\frac{1}{(r-u_r)^2}\rt)\sim \frac{r^2e^{-(r-u_r)^2/2}}{\sqrt{2\pi}(r-u_r)}.
    $$
	 Analogous to \eqref{eq:vre_proof}, we apply $\delta^2=h_r(u_r)$ and arrive at
	\begin{equation}
	\begin{aligned}
		\limsup_{r\to\infty}r^2(r-u_r)^2\frac{\sigma_r^2}{p_r^2}\leq&\frac{4\sqrt 6}{\pi} \frac{1}{\liminf_{r\to\infty}r^4p_r^2}\limsup_{r\to\infty}r^4\frac{e^{-(r-u_r)^2}}{(r-u_r)^2}\leq8\sqrt 6<\infty.
	\end{aligned}
	\end{equation}
	This completes the proof.
\end{proof}

\section{Technical Lemmas}\label{app:lemma}

\begin{lemma}\label{lem:vc-class}
	Fix $\bz\in(0,\infty)\times\bbR^{n-1}$ and let $g(u)=d(f_u(\bz),\set)-u$ for any $u$ in a compact interval $\Theta$ of $(0,x_1^*)$. We say that $v$ is a zero-crossing if it is in the interior of $\Theta$ and there exists $\delta>0$ such that $\sone\{g(v-t)\leq 0\}\neq \sone\{g(v+t)\leq 0\}$  for all $t \in(0,\delta)$. Then, there are at most two zero-crossings.
\end{lemma}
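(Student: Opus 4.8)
The plan is to prove the stronger statement that
$\set_\bz \coloneqq \{u \in \Theta : g(u) \le 0\}$ is an interval, where $g(u) = d(f_u(\bz),\set)-u$. A bounded interval has at most two boundary points, and every zero-crossing is a boundary point of $\set_\bz$ lying in the interior of $\Theta$ (at an interior point of $\set_\bz$ or of its complement, $\sone\{g\le 0\}$ is locally constant), so this immediately yields at most two zero-crossings.

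First I would set notation: let $a = x_1^*$, $w = (z_2,\dots,z_n)$, $\phi(u) = (a-u)+z_1/(a-u)$, and $\psi(t) = d((t,w),\set)$, so that $g(u) = \psi(\phi(u)) - u$. On $(0,a)\supseteq\Theta$ the function $\phi$ is $C^1$ and strictly convex, since $\phi''(u) = 2z_1/(a-u)^3 > 0$ (using $z_1>0$), and $\psi$ is convex and $1$-Lipschitz as a line-restriction of $d(\cdot,\set)$. Introduce the one-sided monotone envelopes $\tilde\psi(t) = \inf_{s\ge 0}\psi(t-s)$ and $\hat\psi(t) = \inf_{s\ge 0}\psi(t+s)$; as infimal convolutions of the convex function $\psi$ with indicators of half-lines they are convex, and they are $1$-Lipschitz, with $\tilde\psi$ nonincreasing and $\hat\psi$ nondecreasing. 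The key reduction is the identity, valid for all $t\in\bbR$ and $u>0$,
$$\psi(t)\le u \iff \tilde\psi(t)\le u \ \text{and}\ \hat\psi(t)\le u,$$
whose nontrivial direction uses convexity of $\psi$: if $\psi(t')\le u$ for some $t'\le t$ and $\psi(t'')\le u$ for some $t''\ge t$, then $\psi(t)\le\max(\psi(t'),\psi(t''))\le u$, the case of $t',t''$ straddling the set where $\psi$ is minimized being covered because there $\tilde\psi(t)=\min\psi$. Hence $\set_\bz = P\cap Q$ with $P = \{u\in\Theta:\hat\psi(\phi(u))\le u\}$ and $Q = \{u\in\Theta:\tilde\psi(\phi(u))\le u\}$.

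It then suffices to see that $P$ and $Q$ are intervals. For $P$: since $\hat\psi$ is convex and nondecreasing and $\phi$ is convex, $\hat\psi\circ\phi$ is convex on $(0,a)$, so $u\mapsto\hat\psi(\phi(u))-u$ is convex and $P$ is a sublevel set of it. For $Q$: I would show $u\mapsto\tilde\psi(\phi(u))-u$ is strictly decreasing on $(0,a)$. This map is locally Lipschitz, hence absolutely continuous, and at a.e.\ $u$ its derivative equals $\tilde\psi'(\phi(u))\phi'(u)-1$ (the chain rule fails only on the null set where $\phi(u)$ is a corner of $\tilde\psi$ and $\phi'(u)\ne 0$, and at the lone critical point $u_0 = a-\sqrt{z_1}$ of $\phi$, where $\phi'(u_0)=0$, the derivative of the map is just $-1$). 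Since $\tilde\psi$ is $1$-Lipschitz and nonincreasing, $\tilde\psi'\in[-1,0]$ a.e.; when $\phi'(u)\ge 0$ this forces the derivative $\le -1<0$, and when $\phi'(u)<0$ we have $|\phi'(u)| = 1 - z_1/(a-u)^2 < 1$, so $\tilde\psi'(\phi(u))\phi'(u)\le|\phi'(u)|<1$ and again the derivative is $<0$. Thus $Q$, and hence $\set_\bz = P\cap Q$ as an intersection of two intervals, is an interval.

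The main obstacle — and the reason for introducing the envelopes — is that $\psi$ is only convex, not monotone, so $\psi\circ\phi$ need not be convex and $\set_\bz$ is not obviously an interval; the decomposition splits $d(\cdot,\set)$ into a nondecreasing part, where convexity survives composition with the convex $\phi$, and a nonincreasing part, where the V-shape of $\phi$ together with $|\phi'|<1$ on its decreasing branch forces $\tilde\psi\circ\phi$ minus the identity to be strictly decreasing. The remaining technical care is the a.e.\ chain rule for $\tilde\psi\circ\phi$ when $\tilde\psi$ is merely convex and $\phi'$ vanishes once, which can be handled via the fundamental theorem of calculus for convex functions combined with the $C^1$ change of variables $s=\phi(u)$ (splitting at $u_0$).
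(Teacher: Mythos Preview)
Your proof is correct and, while it rests on the same two facts as the paper's (convexity and $1$-Lipschitzness of $\psi$, strict convexity of $\phi$ with $|\phi'|<1$ on its decreasing branch), it is organized quite differently. The paper splits the \emph{domain} $\Theta$ at $u_0=x_1^*-\sqrt{z_1}$: on $[u_L,u_0]$ the bound $|\phi'|\le1$ together with $1$-Lipschitzness of $d(\cdot,\set)$ shows $g$ is nonincreasing; on $(u_0,u_U]$ the monotonicity of $\phi$ makes $\psi\circ\phi$ unimodal, with $g$ convex on the increasing part; the pieces are then glued. You instead decompose on the \emph{range} side, writing $\psi\le u$ as the conjunction of two monotone conditions via the envelopes $\tilde\psi,\hat\psi$, so that $\set_\bz=P\cap Q$. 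This buys you the slightly stronger and cleaner statement that $\set_\bz$ is a single interval---which is exactly what Step~1 of the proof of Theorem~\ref{thm:clt} actually invokes---and packages the zero-crossing count without any gluing. The price is the a.e.\ chain rule for $\tilde\psi\circ\phi$, but your justification is sound: the nondifferentiability set of the convex $\tilde\psi$ is countable and $\phi$ is at most two-to-one, and on $\Theta$ the derivative of $\tilde\psi\circ\phi-\mathrm{id}$ is in fact bounded above by $-z_1/(x_1^*-u)^2<0$ on the left branch and by $-1$ on the right, so strict monotonicity follows without further subtlety.
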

\begin{proof}
	Let $z_1(u)=x_1^*-u+z_1/(x_1^*-u)$ be the first coordinate of $f_u(\bz)$ for $u\in\Theta$. We write $\Theta =[u_L,u_U]$ for some $0<u_L\leq u_U<x_1^*$.
    
    Let $u_* = \min\{\max\{u_L, x_1^*-\sqrt{z_1}\},u_U\}$, $I_1\coloneqq[u_L,u_*]$, and $I_2\coloneqq (u_*,u_U]$. Since $d(\cdot,\set)$ is 1-Lipschitz and $|z_1'(\cdot)|\leq 1$ on $I_1$, we have $|g(u)+u-g(v)-v|\leq |z_1(u)-z_1(v)|\leq|u-v|$, which implies that $g(u) \leq g(v) + v-u+|u-v|$
	for any $u,v\in I_1$. Thus, if  $g(v)\leq 0$ for some $v\in I_1$, then $g(u) \leq 0$ for all $u\in[v,u_*]$.
    
    On the other hand, $z_1(\cdot)$ is strictly increasing and convex on $I_2$. Moreover, it can be easily verified that $d(\by,\set)$ is convex in $y_1$.
    Thus, $d(f_u(\bz),\set)$ is decreasing with respect to $u$ on $(u_*,w]$ and increasing on $(w,u_U]$ for some $w\in(u_*,u_U]$. This suggests that $g(\cdot)$ is also decreasing on $(u_*,w]$. Furthermore, $g(\cdot)$ is convex on $(w,u_U]$. 
	Therefore, there are at most two zero-crossings in $\Theta$.
\end{proof}  

\begin{lemma}
	\label{lem: differentiability}
	$h(\cdot)$ and $p(\cdot)$ are differentiable at ${u_*}$ with $h'({u_*})=u_*^2p'({u_*})\neq0$. 
\end{lemma}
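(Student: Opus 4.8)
The plan is to establish differentiability of $h(\cdot)$ and $p(\cdot)$ at $u_*$ separately and then relate their derivatives via an integration-by-parts identity. First I would recall from the setup that $p(u)=\sP_0(d(\bX,\set)\leq u)$ and $h(u)=\sE_0[d(\bX,\set)^2;d(\bX,\set)\leq u]$, so $h$ is the truncated second moment of the random variable $D\coloneqq d(\bX,\set)$ while $p$ is its cumulative distribution function. The key observation is that $D$ has a density on a neighborhood of $u_*$: indeed, since $\{\bx:d(\bx,\set)\leq u\}$ is the Minkowski sum $\set\oplus\{\|\bx\|\leq u\}$, its boundary is a smooth hypersurface for $u>0$, and the Gaussian measure of the boundary is zero; more quantitatively one can write $p(u)-p(u') = \sP_0(u'<D\leq u)$ and bound this above and below by integrating the Gaussian density over the shell $\{u'<d(\bx,\set)\leq u\}$, whose volume is controlled by the surface area of the level set times $(u-u')$ up to lower-order terms. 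This yields a density $p'(u)>0$ on $(0,x_1^*)$, in particular at $u_*$ (strict positivity because the level set has positive surface measure and the Gaussian density is strictly positive).

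Next, granting that $D$ has a continuous, strictly positive density $p'$ near $u_*$, differentiability of $h$ is immediate: $h(u)=\int_0^u t^2\, \rd p(t)$, so by the fundamental theorem of calculus $h'(u)=u^2 p'(u)$, which is exactly the claimed relation $h'(u_*)=u_*^2 p'(u_*)$. Since $p'(u_*)>0$ and $u_*\in(0,x_1^*)$ is strictly positive (by Lemma~\ref{lem:dual} together with the fact that $\set$ does not contain the origin, so $u_*>0$), we get $h'(u_*)=u_*^2 p'(u_*)\neq 0$. The only remaining point is to make the density argument rigorous.

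I would make the density claim precise using the importance-sampling representation already in the paper: $p(u)=\sE[P(\bZ,u)]$ with $P(\bz,u)=\sone\{d(f_u(\bz),\set)\leq u\}L_u(\bz)$, and similarly for $h$. Because $f_u$ is an invertible affine map and $L_u$ is smooth in $u$, and because the event $\{d(f_u(\bZ),\set)=u\}$ has probability zero (as noted in Step~3 of the proof of Theorem~\ref{thm:clt}, since $f_{u_*}$ is invertible affine and $\sP(d(f_{u_*}(\bZ),\set)=u_*)=0$), one can differentiate under the expectation after a change of variables, reducing to differentiating the Gaussian integral of the indicator of a convex region whose boundary moves at unit normal speed. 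Concretely, writing $g(\bx)=d(\bx,\set)$, the coarea formula gives $p(u)=\int_0^u \big(\int_{\{g=t\}} \varphi_n \,\rd\mathcal{H}^{n-1}/|\nabla g|\big)\rd t$ where $\varphi_n$ is the standard Gaussian density and $|\nabla g|=1$ a.e., so $p'(u)=\int_{\{g=u\}}\varphi_n\,\rd\mathcal{H}^{n-1}$, continuous and strictly positive in $u$ on $(0,x_1^*)$.

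The main obstacle I anticipate is the regularity of the distance function $g=d(\cdot,\set)$ and justifying the coarea/disintegration step cleanly: although $g$ is $1$-Lipschitz and convex (so $|\nabla g|=1$ a.e.\ on $\{g>0\}$ and level sets are nice), one must argue that the level set $\{g=u_*\}$ carries positive but finite $(n-1)$-dimensional Hausdorff measure and that the map $u\mapsto\int_{\{g=u\}}\varphi_n\,\rd\mathcal{H}^{n-1}$ is continuous at $u_*$ — this requires either a direct shell-volume estimate or an appeal to the structure of Minkowski sums of convex bodies. An alternative, possibly cleaner, route that sidesteps coarea entirely is to bound $p(u)-p(u')$ directly: $\{\bx: u'<d(\bx,\set)\leq u\}\subseteq (\partial(\set\oplus u'B))\oplus (u-u')B$, whose Gaussian mass is $O(u-u')$ with a matching lower bound, giving differentiability by a squeeze argument and identifying the derivative as the surface integral above. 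Either way, once $p'(u_*)>0$ is in hand, the identity $h'(u_*)=u_*^2p'(u_*)$ and the conclusion follow with no further difficulty.
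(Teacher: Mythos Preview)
The proposal is correct and follows essentially the same approach as the paper: both express $p'(u)$ via the coarea formula as the Gaussian surface integral $\int_{\{d(\cdot,\set)=u\}}\phi\,\rd\cH^{n-1}$, identify continuity of this surface integral in $u$ as the key technical point (which the paper handles by the normal-map parametrization $\bz\mapsto\bz+tn(\bz)$ together with Ball's uniform bound on Gaussian surface area of convex sets), and then deduce $h'(u_*)=u_*^2p'(u_*)$ from $h(u)=\int_0^u t^2\,\rd p(t)$, which is exactly the paper's squeeze argument. Your anticipated obstacle is precisely the one the paper resolves, and your suggested appeal to the structure of Minkowski sums of convex bodies is what the paper invokes.
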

\begin{proof}
Recall that $p(u) = \sP(d(\bX,\set)\leq u)$.
It is straightforward to check that $d(\cdot,\set)$ is 1-Lipschitz and differentiable almost everywhere with $\|\nabla d(\cdot,\set)\|=1$. Then, by the coarea formula \cite[Theorem 3.4.2]{EvansGariepy1992}, we have
\begin{equation}\label{eq:coarea}
	\begin{aligned}
		p({u_*}+\delta) - p({u_*}) &=
		\sP({u_*}<d(\bX,\set)\leq {u_*}+\delta) \\
		&= \int_{\bbR^d} \phi(\bx)\sone{\{{u_*}<d(\bx,\set)\leq {u_*}+\delta\}}\|\nabla d(\bx,\set)\|\rd\bx\\
		&=\int_\bbR\lt(\int_{\partial(\set+B({u_*}+t))}\phi(\bz)\sone{\{{u_*}<d(\bz,\set)\leq{u_*}+\delta\}}\rd\cH(\bz)\rt)\rd t\\
		&=\int_0^\delta\lt(\int_{\partial(\set+B({u_*}+t))}\phi(\bz)\rd\cH(\bz)\rt)\rd t,
	\end{aligned}
\end{equation}
where $\phi(\bz) = (2\pi)^{-n/2} e^{-\|\bz\|^2/2}$ is the density of the $n$-dimensional standard Gaussian distribution, and $\cH$ is the $(n-1)$-dimensional Hausdorff measure.

We write $\set_u\coloneqq\{\bx:d(\bx,\set)\leq u\}$. By the fundamental theorem of calculus, it suffices to show that 
$g(u)\coloneqq\int_{\partial\set_u}\phi(\bz)\rd\cH(\bz)$
is continuous on $(0,\infty)$. 
To that end, we fix $u>0$ arbitrarily and denote by $n(\bz)$ the outer unit normal vector at $\bz \in\partial\set_u$. Then, by the change of variables, 
$$ 
g(u+t) = \int_{\partial\set_u} \phi(\bz + t n(\bz)) J_t(\bz) \rd\cH(\bz),
$$
where $J_t(\bz)$ denotes the Jacobian of the mapping $\bz\mapsto \bz + t n(\bz)$ for each $t\geq 0$. 
By the smoothness of $\partial\set_u$ and the convexity of $\set_u$, it is not difficult to check that the Jacobian $J_t(\bz)$ is nonnegative and continuous in both $\bz$ and $t$; see, e.g., \cite{Schneider:13} and \cite{Cecil:15}.

Fix $\epsilon>0$ small enough. 
Let $\eta(\bz,t)= \phi(\bz + t n(\bz)) J_t(\bz)$ for $(\bz,t)\in\partial\set_u\times[0,\infty)$. Then, we can choose a compact set $K\subset\partial\set_u$ and a constant $t_K>0$ such that for all $t\in[0,t_K]$,
$|\int_{K} \eta(\bz,t) \rd\cH(\bz)-\int_{K} \eta(\bz,0) \rd\cH(\bz)|<{\epsilon}/{3}$ and $\int_{\partial\set_u\setminus K} \eta(\bz,t) \rd\cH(\bz)<{\epsilon}/{3}$.
This is feasible due to the uniform continuity of $\eta$ on $K\times[0,t_K]$, the nonnegativity of $\eta$ on $\partial\set_u\times[0,\infty)$, and the uniform boundedness of $g$ by \cite{Ball:93}. Hence, for all $t\in[0,t_k]$,
\begin{equation}
\begin{aligned}
	&|g(u+t)-g(u)| \\
	&\leq \left|\int_{K} \eta(\bz,t) \rd\cH(\bz)-\int_{K} \eta(\bz,0) \rd\cH(\bz)\right|+\int_{\partial\set_u\setminus K} \eta(\bz,t) \rd\cH(\bz)+\int_{\partial\set_u\setminus K} \eta(\bz,0) \rd\cH(\bz)\\
	&<\epsilon.
\end{aligned}
\end{equation}
Consequently, $p'({u_*}) = g({u_*})>0$.
By the definition of $h$, for any $\varepsilon>0$ small enough, we have
	$u_*^2(p(
	{u_*}+\varepsilon)-p(
	{u_*}))\leq h(
	{u_*}+\varepsilon)-h({u_*})\leq(
	{u_*}+\varepsilon)^2(p({u_*}+\varepsilon)-p({u_*}))$.
Dividing all expressions by $\varepsilon$ and sending $\varepsilon\to0$ result in $h'({u_*})=u_*^2p'({u_*})>0$.
\end{proof}

\bibliographystyle{abbrvnat}
\bibliography{references} 
\end{document}